\documentclass[runningheads]{llncs}
\usepackage{algorithm}
\usepackage{algpseudocode}
\usepackage[T1]{fontenc}
\usepackage{graphicx}
\usepackage[margin=1in]{geometry} 

\usepackage{amsmath,amssymb,mathtools,bbm}
\usepackage{booktabs}
\usepackage{enumitem}
\usepackage{xcolor}
\usepackage[colorlinks=true,linkcolor=blue,citecolor=blue,urlcolor=blue]{hyperref}

\newcommand{\E}{\mathbb{E}}
\renewcommand{\P}{\mathbb{P}}

\newcommand{\OPT}{\operatorname{OPT}}

\newcommand{\Rev}{\operatorname{Rev}}
\newcommand{\argmax}{\operatorname*{arg\,max}}

\newcommand{\Halmos}{\hfill$\blacksquare$}

\newtheorem{assumption}{Assumption}
\begin{document}
\title{Online Stochastic Allocation with Increasing Returns}
\author{Shuo Sun\inst{1}\orcidID{0009-0004-6849-4024} \and Yunduan Lin\inst{2}\orcidID{0000-0001-9155-3652}}
\institute{The University of Wisconsin–Madison, \email{ssun279@wisc.edu} \and The Chinese University of Hong Kong, \email{yunduanlin@cuhk.edu.hk}}
%
%
%
\maketitle              

\begin{abstract}
Online resource allocation is a fundamental problem with applications in revenue management, sponsored search advertising, and platform operations. A large body of prior work studies settings in which each assignment generates a fixed reward, or more generally a nonincreasing reward that captures diminishing returns. While these models cover many important applications, they do not capture settings with increasing returns, where assigning more customers to the same product can unlock larger value through scale effects, learning, visibility, network effects, or coordination benefits.

In this paper, we study online allocation with increasing return. Capacity-limited products must be assigned to sequentially arriving customers, and the final reward of each product depends on the total number of customers assigned to it. We focus on the full-compatibility setting, where every product can be assigned to every customer. The number of customers is unknown to the online algorithm. We show that representative classical approaches, such as online greedy algorithm and LP-based independent rounding, can perform arbitrarily bad in this setting, even when all products share a common bonus function.

Our results characterize the algorithmic possibilities and limitations of online allocation with increasing returns. For homogeneous bonus functions, we give a simple polynomial-time algorithm that achieves a tight $1/2$-competitive ratio. In contrast, for arbitrary heterogeneous bonus functions, we show that no constant competitive ratio independent of the number of products is possible: for $m$ products, the optimal competitive can be as small as
\(
    \Theta\left({1}/{\sqrt m}\right).
\)
This lower bound shows that heterogeneous delayed rewards can force any online algorithm to guess the realized arrival counts.

Finally, we identify a structured heterogeneous reward function regime in which constant guarantees can be recovered. We consider the setting that each bonus function is nonnegative, nondecreasing, and discrete concave. The discrete concavity controls the increasing speed of the marginal reward. Under this condition, we design an intermediate target repair algorithm with a deterministic pathwise guarantee of $0.2466$. The algorithm repeatedly computes an offline target allocation for a larger intermediate demand level and repairs the current online allocation toward this target through a prefix-robust assignment order. This repair approach coordinates buildup while remaining robust to early stopping, and provides distribution-free guarantees for online resource allocation with increasing returns.

\keywords{Online algorithms \and Online resource allocation \and Increasing returns \and Competitive analysis}
\end{abstract}
\newpage
\section{Introduction}

In many contemporary markets, a product's value to a consumer depends not only on its intrinsic attributes but also on its adoption state at the time of purchase. Social networks, online communities, and real-time market signals make accumulated adoption highly visible. A growing adopter base can convey information about product quality, social relevance, or resale potential. As a result, even when the product itself remains unchanged, prior adoption can increase the utility that later consumers attach to the same product and, in turn, raise their willingness to pay. A recent example is the rise of ``Labubunomics.'' Labubu is a collectible character commercialized by Pop Mart as part of its designer-toy portfolio, gained global momentum after high-profile public exposure and celebrity adoption \cite{reuters2025comiccon}. As Labubu moved from a niche collectible to a broader consumer phenomenon, secondary-market prices began to reflect willingness to pay well above retail prices. Modern Retail reports that Labubu's ``Big Into Energy'' six-pack sold at an average resale price of \$268, nearly 60\% above its original retail price of \$168. It also reports that some suppliers raised Labubu prices by as much as 30\% within a single week \cite{modernretail2025labubunomics}. From the seller's perspective, these resale premiums illustrate how accumulated adoption and market momentum can make later sales more valuable than earlier ones. 

The same increasing return mechanisms appear in several other economic, marketing, and operations management settings. In platform markets, network effects and word-of-mouth can make new user more valuable when the adopted base is larger \cite{katz1985network}. Firms may also stimulate early adoption through early-bird discounts or promotional trials. Although these early sales may generate low immediate margins, they can raise later willingness to pay by making the product more credible and widely adopted. In production and operations, earlier units may create experience, learning, or process improvements that increase the value of subsequent units \cite{arrow1962economic,arthur1994increasing}. Across these settings, early units may be less valuable themselves, but build the conditions under which later units become more valuable.

This paper studies the online allocation problem created by such increasing
returns. Customers arrive sequentially, and each arrival must be assigned immediately and irrevocably to one of several capacity-limited products or rejected. The central feature of our model is that a product's marginal reward may increase with its accumulated sales: serving a later customer with a product can be more valuable than serving an earlier one. Thus, each allocation decision affects not only the current reward but also the future reward opportunities created by further buildup of the same product. The online algorithm makes these allocation decisions knowing the capacities, reward functions, horizon, and arrival probabilities, but without knowing the realized number of arrivals. We focus on the full-compatibility setting, where every arriving customer can be assigned to any product with remaining capacity. We compare online algorithms with the hindsight offline benchmark, which observes the realized number of arrivals and then chooses the best final allocation.

Despite their prevalence in practice, the online resource allocation problems induced by increasing returns remain comparatively less understood.  The classical theory of online resource allocation has developed mostly fixed or diminishing rewards, where the marginal value of allocating one more unit to a resource is nonincreasing \cite{KVV1990,MSVV2007,FMMM2009,DevanurHayes2009,Mehta2013}. 
Under diminishing returns, spreading demand across resources is safe because assigning another unit to an already well-served resource is no more valuable than before. In our increasing-return model, this logic is reversed. Marginal values can rise with accumulated sales, creating algorithmic pressure toward concentration rather than the spreading behavior that underlies many classical guarantees. 
An algorithm that spreads arrivals too evenly may fail to build any product to the scale at which high marginal rewards are realized. The central tradeoff is therefore concentration under uncertainty: committing to one product can create future value, but doing so before the realized arrival is known exposes the algorithm to the risk of unfinished buildup. This structural difference makes standard algorithms and analyses from classical online allocation do not transfer to our setting. \cite{blum2015online} study a related form of increasing returns arising from decreasing marginal production costs. Their uncertainty, however, concerns sequentially revealed buyer valuations, whereas in our setting rewards are known and uncertainty lies in the realized number of arrivals, giving rise to a different form of commitment risk.

In this paper, we study both the setting of homogeneous and heterogeneous sales-driven bonus functions across products. Our benchmark throughout is the hindsight offline optimum, and thus our perforamance metric is competitive ratio. Below we list our main results and contributions. 

\subsection{Results and Technical Ideas}
Our results can be summarized as follows.

First, we show that two standard allocation algorithms, myopic greedy and LP-based rounding, do not admit constant guarantees in our model. 
Both failures hold even in the full-compatibility setting and even when all products share a common bonus function.

Second, when all products share the same bonus function, we propose an algorithm that achieves a tight $1/2$ guarantee. The algorithm compares two fixed-order fill algorithms: the base-ordered fill algorithm, which fills products in nonincreasing base reward, and the capacity-ordered fill algorithm, which fills products in nonincreasing capacity. Their average obtains a $1/2$ guarantee, and when an arrival distribution is available the better of the two can be selected in polynomial time. We also prove that no online algorithm can guarantee a factor strictly larger than $1/2$ against the same offline benchmark, even under homogeneous bonuses.


Third, when products may have different bonus functions, the homogeneous $1/2$ guarantee no longer extends. For arbitrary product-specific bonus functions, we show that no constant guarantee independent of the number of products is possible. We construct a family of lump-sum instances with $m$ products for which the optimal online-to-offline ratio is asymptotically
\(
    {\binom{m-1}{\lfloor(m-1)/2\rfloor}}/{2^{m-1}}=\Theta(1/\sqrt m).
\)
The construction shows that heterogeneous delayed rewards force the online algorithm to guess the realized served count: each realization makes a different product the right one to complete.

Finally, we identify a broad heterogeneous reward function regime in which constant guarantees are restored. We assume that each bonus function is nonnegative, nondecreasing, and discrete concave. We emphasize that this condition is imposed on the marginal bonus per sale, not on the cumulative reward of a product. The cumulative reward remains increasing-convex in sales volume, in contrast to concave-return models where marginal rewards are nonincreasing~\cite{devanur2012online}.  This condition preserves increasing returns, but it prevents a product from hiding nearly all of its value behind a late threshold. Myopic greedy and LP-based independent rounding can still perform arbitrarily poorly in this regime, so a different approach is needed. Under this structure, we design an intermediate target repair algorithm with a deterministic pathwise guarantee of $0.2466$. 
The algorithm uses a geometric grid of served-count milestones. At each milestone, it computes an offline target allocation at a moderately larger served count and then forms the repair increment needed to move the current allocation toward that target. The difficulty is that arrivals may stop before this repair is completed. The key structural result shows that discrete-concave bonuses rule out extreme backloading: the planned repair assignments can be ordered so that every initial segment already captures a controlled fraction of the full incremental value. This prefix property protects between two consecutive milestones: if enough arrivals occur, the algorithm completes the repair increment; if arrivals stop early, the executed initial segment still captures enough value. Combining this prefix guarantee with a comparison of offline values across realized number of arrivals gives the deterministic constant guarantee. Since neither the algorithm nor the analysis depends on the arrival distribution, and the guarantee holds for every realized arrival count, the result is distribution-free.

\begin{table}[ht!]
\centering
\renewcommand{\arraystretch}{1.2}
\begin{tabular}{lll}
\toprule
Bonus structure & Guarantee vs.\ hindsight optimum & Main result \\
\midrule
Homogeneous $f_i=f$ & tight $1/2$ & Thms.~\ref{thm:homo-half-approx}, \ref{thm:homo-half-tight} \\
Arbitrary heterogeneous $f_i$ & no constant; $\Theta(1/\sqrt m)$ barrier & Thm.~\ref{thm:lump-decay} \\
Discrete-concave heterogeneous $f_i$ & $0.2466$ & Thm.~\ref{thm:det-repair} \\
\bottomrule
\end{tabular}
\caption{Landscape of online allocation under increasing state-dependent rewards.}
\label{tab:regimes}
\end{table}
\vspace{-5pt}
\subsection{Related Work}
Our work is related to the large literature on online resource allocation, where requests arrive sequentially and allocation decisions must be made immediately and irrevocably. The canonical starting point is online bipartite matching, for which the Ranking algorithm achieves the optimal \(1-1/e\) competitive ratio in the adversarial arrival model~\cite{KVV1990}. This framework has been extended in several directions, including vertex-weighted online matching, AdWords and budgeted allocation, display-ad allocation, online stochastic matching, online stochastic packing, and network revenue management~\cite{MSVV2007,AggarwalGoelKarandeMehta2011,FMMM2009,DevanurHayes2009,Mehta2013,huang2024online}. A common feature of these models is that the reward of assigning a request to a resource is fixed by the edge, request type, bid, or fare. State enters through feasibility: capacity or budget is consumed over time, and hence the opportunity cost of using a resource changes. But the assignment itself does not make future assignments to the same resource intrinsically more valuable.
This fixed-reward structure is what makes the classical algorithmic principles effective. Greedy, Balance, Ranking, primal-dual methods, and LP-based randomized rounding all rely, in different ways, on the fact that using a resource does not increase its future marginal reward. Thus spreading demand across resources is a safe hedge against future uncertainty: once a resource has already received substantial allocation, allocating more to it is no more attractive than before. This is exactly the principle that breaks in our setting, where allocating early customers to a product can increase the marginal reward of later customers assigned to the same product.

Several papers allow rewards to depend on accumulated allocation, but most such models impose a diminishing-returns structure. Devanur and Jain~\cite{devanur2012online} study online matching with concave returns. In their model, the total reward from a resource is a concave function of the total assigned amount. This generalizes AdWords, whose budgeted linear objective is a special concave-return function. They establish a familiar \(1-1/e\) for budgeted linear objectives. However, one reason such guarantees are possible is that the reward functions are concave: the marginal return from allocating more mass to the same advertiser is nonincreasing, so primal-dual prices can safely represent the opportunity cost of future allocation.
This is different from the discrete-concavity assumption used in our positive result. In our model, discrete concavity is imposed on the bonus function, thus the immediate reward for an assignment,  rather than the cumulative reward. Since the immediate is nondecreasing in the sales, the accumulated reward is instead discrete convex. A related line studies online welfare maximization with submodular valuations. For example, \cite{kapralov2013online} show that greedy achieves a $1-1/e$ guarantee for monotone valuations satisfying a diminishing-returns property over multisets. Again, the key structural assumption is opposite to ours: assigning more items to an agent weakly decreases the marginal value of subsequent items. This allows a local greedy step to be related to the remaining gap to the offline benchmark. 

Methodologically, our geometric milestone schedule shares the similar spirit as the doubling method commonly used in online convex optimization \cite{elad2016,Shalev2012}. The doubling method takes a known-horizon online resource-allocation algorithm, run it on $[1,T^\ast]$, then restart it on $[T^\ast,2T^\ast]$, then restart again, etc. We note that directly applying doubling rails for general online resource allocation problems for several reasons: first, the problem in online convex optimization
decouples and the regret from the different intervals are added together to get total regret, but we are comparing against an offline optimum that  is not the sum of the interval-wise
benchmarks. Second, restarting a known-horizon allocation algorithm on doubled time intervals ignores the global resource constraint \cite{Santiago2023}. 
Our use of geometric phases avoids these difficulties. At each milestone, it preserves all previously accumulated allocations, computes an offline optimum target at a larger demand scale, and repairs the current allocation toward that target. Thus the geometric grid is used only to choose demand scales. The analysis relies not on decomposability across phases, but on the scale-smoothness implied by the structural property under discrete concavity. Similar log-shifted geometric grids ideas have been used in solving  the incremental clustering problem with points arriving online ~\cite{charikar1997incremental}.

Our algorithm is also related to LP re-solving policies in network revenue
management and online resource allocation. Classical re-solving heuristics
periodically update remaining capacities, solve a deterministic fluid LP for the
residual problem, and implement the resulting solution through acceptance or
rounding probabilities~\cite{JasinKumar2012,BumpensantiWang2020}. Our policy has
a similar re-planning flavor, but the object being re-solved is different: at
served-count milestones, we solve a deterministic offline allocation problem at a
larger target scale. We then use the solution as a state-dependent target and
repair the current allocation toward it through a prefix-robust deterministic
ordering, rather than independently rounding arrivals according to LP marginals.

\section{Model and Benchmarks}

In this work, we study an online stochastic allocation problem in which the marginal reward of a product increases with its current sales volume. We consider a product set $\mathcal I=\{1,\dots,m\}$. Each product $i$ has capacity $b_i\in\mathbb Z_+$, and each sale of product $i$ generates two types of reward. The first is a base reward $r_i\ge0$, which is earned on every sale. The second is a sales-driven bonus, which depends on how many units of the same product have already been sold. Specifically, if product $i$ has already been sold to $k$ customers, then allocating the next arriving customer to product $i$ becomes its $(k+1)$-st sale and yields marginal reward $r_i+f_i(k+1)$, where $f_i:\mathbb Z_+\to\mathbb R_+$ is a nonnegative and nondecreasing bonus function with $f_i(0)=0$. Thus, if product $i$ has been sold to $n$ customers, its cumulative reward is
\begin{align}
    \label{eq:cum-reward}
    R_i(n)=\sum_{k=1}^n w_i(k)=n r_i+\sum_{k=1}^n f_i(k),\qquad n=1,\dots,b_i,\qquad R_i(0)=0.
\end{align}
Our model considers a full-compatibility setting, that is, every arriving customer is compatible with every product; incorporating customer-product compatibility constraints is a natural extension.

The selling horizon consists of $T$ periods. In period $t$, a customer independently arrives with probability $p_t$ and we use $L$ to denote the total arrivals. 
We consider online algorithms for allocating arriving customers in real time. At each period $t$, after observing whether a customer arrives, the algorithm must immediately decide whether to allocate the arriving customer, if any, to a product with remaining capacity or to reject the customer. This decision is irrevocable. An online algorithm knows all problem parameters, including the product set $\mathcal I$, capacities $\{b_i\}_{i\in\mathcal I}$, reward functions $\{R_i\}_{i\in\mathcal I}$, but it does not know the realized value of $L$ in advance. 
For an online algorithm $\pi$, let $\Rev^{\pi}$ denote the random total reward collected by the algorithm, where the randomness comes from the arrival process and any internal randomization of the algorithm. The optimal online value is $\OPT_{\mathrm{on}}=\sup_{\pi}\mathbb E[\Rev^{\pi}]$.

We benchmark the performance of the online algorithm against the offline optimum. The offline decision maker observes the total number of arrivals $L$ in the $T$ horizon before making allocation decisions. In a full-compatibility setting, an offline solution is fully described by a final allocation vector $\boldsymbol{x}$. For any allocation $\boldsymbol{x}\in\mathbb Z_+^m$, define its total reward as \(V(\boldsymbol{x})=\sum_{i\in\mathcal I} R_i(x_i).\) For each feasible arrival-count realization $\ell$, define the offline value \(U(\ell)=\max_{\substack{0\le x_i\le b_i,\forall i\in\mathcal I,  \|\boldsymbol{x}\|_1\leq \ell}}V(\boldsymbol{x}).\) $U(\ell)$ is the best reward obtainable in the offline allocation problem in which exactly $\ell$ customers are served.
The offline problem can be solved in polynomial time by dynamic programming. Since we will need to solve the offline problem for given number of customers in the algorithm, we discuss the solution algorithm later in Appendix \ref{app:offline}. Taking expectations over $L\sim \sum_{t=1}^T \text{Bernoulli}(p_t)$ and optimizing over online algorithms gives $\OPT_{\mathrm{on}}\le \mathbb E[U(L)]$. 

Having established $\mathbb E[U(L)]$ as an offline benchmark, we measure online algorithms relative to this quantity. For $c\in[0,1]$, we say that an online algorithm $\pi$ is $c$-competitive for a family of problem instances if $\mathbb E[\Rev^\pi]/\mathbb E[U(L)]\ge c$ for every instance in that family. The competitive ratio of $\pi$ on this family is the largest such constant $c$, equivalently, the infimum of $\mathbb E[\Rev^\pi]/\mathbb E[U(L)]$ over all instances in the family. 

Our analysis extends without modification to general random-horizon models, without restricting $L$ to follow a Poisson-binomial distribution. In particular, our positive guarantees are established pathwise for every realized arrival count $\ell$ and therefore do not rely on any specific distributional structure of $L$ beyond knowledge of its distribution. The same proofs thus apply to arbitrary random-horizon distributions, including those with high variance. When an algorithm requires explicit evaluation of expectations over $L$, we additionally assume that $L$ has finite support, or equivalently, a known finite upper bound. Our negative results also continue to hold because their constructed arrival distributions remain admissible under this broader formulation.


\section{Performance Limits under Homogeneous and Heterogeneous Bonuses}
\label{sec:limit}

\subsection{Failure of Greedy and Linear-Programming-Based Rounding}
\label{sec:limit-fail}

In this subsection, we show that classical algorithms for online allocation problem can fail under increasing marginal rewards. We focus on two representative approaches. The first is the myopic greedy algorithm, which chooses the available product with the largest immediate reward. The second is linear programming (LP)-based independent rounding algorithm, which first solves an ex-ante relaxation of the allocation problem and then converts the resulting solution into independent rounding probabilities. Although both approaches are natural in classical online allocation settings, we show that neither admits a constant guarantee in our model, even relative to the optimal online value. More importantly, these failures already arise in the full-compatibility setting with a common bonus function across products.

We first consider the myopic greedy algorithm. Given the current allocation $\boldsymbol{x}$, the myopic greedy algorithm allocates an arriving customer to an available product with the largest current marginal reward, that is, to some $i\in \arg\max_{j:\,x_j<b_j} \{r_j+f_j(x_j+1)\}$. If all products are sold out, it rejects the customer. 
This algorithm is the most direct baseline for online resource allocation. In classical fixed-return settings, greedy gives a natural $1/2$-competitive ratio in various settings and may achieve stronger guarantees under additional assumptions \cite{KVV1990,goel2008online}. More broadly, in fixed- or diminishing-return models, the current marginal reward of a product upper bounds the marginal rewards that can be obtained from later allocations to the same product. This fact is central to classical exchange, primal-dual, and submodular-greedy analyses: a locally best allocation can be compared with future allocations in the offline optimum because future marginal rewards cannot increase through prior allocations.
Under increasing returns, however, this comparison no longer holds. Allocating a customer to a product can raise the value of future allocations to that product, so the product that is myopically optimal now may be much worse than another product whose value is created through buildup. More specifically, an allocation with low current return may still be valuable because it moves the product closer to later, higher-return sales volume. Proposition~\ref{prop:greedy-fail} shows that this change in reward structure entirely invalidates the myopic greedy logic.
\begin{proposition}[Myopic greedy algorithm has no constant guarantee]
\label{prop:greedy-fail}
There is no universal constant $c>0$ such that the myopic greedy algorithm satisfies $\mathbb E[\Rev^\mathrm{Greedy}]\ge c\,\OPT_{\mathrm{on}}$ for all instances under the full-compatibility setting. 
\end{proposition}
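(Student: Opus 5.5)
The plan is to build an explicit family of full-compatibility instances, indexed by a parameter $M\to\infty$, on which greedy's expected reward stays bounded while a simple online policy earns reward of order $M$. Because $\OPT_{\mathrm{on}}$ is at least the value of any particular online policy, it is enough to exhibit one such policy. Since the failure should already appear with a common bonus function, I will use a single $f$ for all products and let greedy be misled only through the base rewards.

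The construction uses two products and a deterministic arrival sequence: $T=n$ with $p_t=1$ for all $t$, so $L=n$. The ingredients are as follows.
\begin{itemize}
\item Both products have capacity $n$.
\item The bonus function is the threshold $f(k)=M\cdot\mathbbm{1}\{k=n\}$, which is nonnegative, nondecreasing from $k=n-1$ to $k=n$, and satisfies $f(0)=0$.
\item Product 1 has base reward $r_1=1$ and product 2 has base reward $r_2=0$.
\end{itemize}

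Under this instance greedy behaves as follows. Initially the marginal rewards are $r_1+f(1)=1$ and $r_2+f(1)=0$, so greedy picks product 1. After $k<n-1$ sales of product 1, its marginal reward is still at least $1>0$, so greedy keeps filling product 1. The $n$-th sale then earns $1+M$.

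That outcome is actually optimal, so greedy's tie-breaking and ordering must be exploited instead. The fix is to make the product that greedy favors the one that can never reach its threshold. I therefore change the capacities to $b_1=n-1$ and $b_2=n$, keeping the same $f$. Now the runs compare as follows.
\begin{itemize}
\item Greedy fills product 1 with all of its $n-1$ units, earning $n-1$, because product 1's marginal reward of $1$ beats product 2's marginal reward of $0$ throughout. At the last arrival only product 2 is available, and it receives its first sale, worth $f(1)=0$. Greedy's total is $n-1$.
\item The online policy that always assigns to product 2 earns $R_2(n)=M$.
\end{itemize}
Taking $M=n^2$ makes the ratio $(n-1)/n^2\to0$. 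The policy is deterministic and online, so $\OPT_{\mathrm{on}}\ge M$. For ties, I would make $r_2$ strictly smaller than $r_1$ so that greedy's choices do not depend on the tie-breaking rule.

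The only delicate point is confirming that greedy never switches to product 2 before the last step. This holds because $r_2+f_2(x_2+1)=f(1)=0$ whenever $x_2=0$ and $n>1$, while product 1 still offers $1$. If one also wants randomness in $L$, the same construction works with $p_t=1$, since the proposition compares greedy only with $\OPT_{\mathrm{on}}$. No earlier result is needed beyond the definition of $\OPT_{\mathrm{on}}$ as a supremum over online policies.
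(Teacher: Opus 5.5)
Your final construction is correct and is essentially the paper's own counterexample. With $n=2$ it is exactly the paper's instance (capacities $1$ and $2$, base rewards $1$ and $0$, common bonus $f(1)=0$, $f(2)=M$), except that your second arrival is certain instead of occurring with probability $p$; greedy fails for the same reason, never seeding the buildup product in time. One small fix: the model requires $f$ to be nondecreasing on all of $\mathbb Z_+$, so write $f(k)=M\cdot\mathbbm 1\{k\ge n\}$ instead of $\mathbbm 1\{k=n\}$ (nothing else changes, since no capacity exceeds $n$). You can also drop the discarded first attempt and simply fix $n=2$ while letting $M\to\infty$.
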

Proposition~\ref{prop:greedy-fail} shows that myopic greedy can perform arbitrarily poorly, even relative to the optimal online value. The failure persists even when all products share a common bonus function, that is, $f_i=f$ for all $i\in\mathcal I$. 
We defer the formal proof, together with the proofs of the remaining results in this section, to Appendix~\ref{app:proof-limit}. 
The proof constructs a two-period, two-product counterexample with a common bonus function $f$. Let $f(1)=0$ and let $f(2)$ be large. Product 1 has capacity one and base reward one, whereas product 2 has capacity two and base reward zero. A customer arrives for sure in the first period, and a second customer arrives with a fixed positive probability. The myopic greedy algorithm allocates the first customer to product 1 because it has larger current marginal reward, thereby leaving product 2 unseeded. In contrast, allocating the first customer to product 2 can unlock the large second-sale reward if the second customer arrives. Thus, for any fixed positive probability of a second arrival, choosing $f(2)$ sufficiently large makes the performance ratio of myopic greedy arbitrarily small.

The failure of myopic greedy suggests that in our setting, an online algorithm should look beyond current marginal rewards. A natural next approach is therefore to use an ex-ante relaxation that accounts for the full arrival process and values the possibility of reaching larger sales volumes. This is the role of LP-based methods in classical online allocation: the LP computes target marginal probabilities, and an online algorithm then attempts to implement those marginals through randomized rounding. 

We construct the following LP relaxation that captures the ex-ante reach probabilities induced by an online algorithm, where $y_{i,k}$ can be interpreted as the probability that product $i$ reaches at least $k$ sales:
\begin{subequations}
\label{eq:lp}
\begin{align}
    V^{\mathrm{LP}}=\max_{\boldsymbol{y}} \quad & \sum_{i\in\mathcal I}\sum_{k=1}^{b_i} (r_i+f_i(k))y_{i,k} \label{eq:lp-obj-inc}\\
    \text{s.t.}\quad & \sum_{i\in\mathcal I}\sum_{k=1}^{b_i} y_{i,k}\le \mu_L, \label{eq:lp-cap-inc}\\
    & y_{i,k}\le y_{i,k-1},\qquad \forall i\in\mathcal I,\ k=2,\dots,b_i, \label{eq:lp-order-inc}\\
    & 0\le y_{i,k}\le 1,\qquad \forall i\in\mathcal I,\ k=1,\dots,b_i. \label{eq:lp-bounds-inc}
\end{align}
\end{subequations}
Constraint \eqref{eq:lp-cap-inc} requires the expected number of assigned units to be no larger than the expected number of arrivals, where $\mu_L=\sum_{t=1}^Tp_t$ is the mean of random arrival $L$. Constraints \eqref{eq:lp-order-inc} capture the sequential structure within each product: product $i$ can reach its $k$-th sale only if it has reached its $(k-1)$-th sale. In Appendix \ref{app:proof-limit}, we show that the optimal value of this relaxation \eqref{eq:lp} is a valid upper bound on the optimal online value, which makes the LP a natural starting point for designing an online algorithm.

We then convert an optimal LP solution into an independent rounding algorithm. Given an optimal solution $\boldsymbol{y}^*$ to \eqref{eq:lp}, define
\[
q_i=\frac{1}{\mu_L}\sum_{k=1}^{b_i} y_{i,k}^*.
\]
The algorithm assigns each arriving customer independently to product $i$ with probability $q_i$, with any remaining probability assigned to rejection. This is the most direct way to turn the LP's aggregate ex-ante allocation quantity into an allocation probability. 
In classical stochastic matching and allocation models with fixed edge or allocation rewards, LP variables represent marginal allocation quantities, and the objective is linear in these quantities. Once a rounding algorithm approximately preserves the relevant marginals and controls feasibility losses, each LP term can be compared directly with the corresponding expected reward of the rounded policy~\cite{FMMM2009,manshadi2012online}. 
For increasing returns, this comparison becomes unreliable because large return is attached to later sales volume. The relaxation controls the expected total number of allocations, but not the risk that the realized arrival sequence stops before those levels are reached. After the sales-level masses of product $i$ are aggregated into the single probability $q_i$, independent rounding may favor a product with low early rewards because its LP value is driven by high later rewards. Hence, although the LP remains a valid upper bound, independent rounding may overemphasize high-volume LP value at the expense of more reliable low-volume rewards. Proposition~\ref{prop:lp-fail} shows that this mismatch between the LP's expected-allocation relaxation and realized arrival risk can make the independent rounding algorithm arbitrarily bad.
\begin{proposition}[LP-based independent rounding has no constant guarantee]
\label{prop:lp-fail}
There is no universal constant $c>0$ such that the LP-based independent rounding algorithm satisfies $\mathbb E[\Rev^\mathrm{LP}]\ge c\,\OPT_{\mathrm{on}}$ for all instances in the full-compatibility model.
\end{proposition}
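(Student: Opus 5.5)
The plan is to build a one-parameter family of two-product instances with a common bonus function. In these instances the LP relaxation \eqref{eq:lp} puts all of its mass on a ``backloaded'' product whose value is realized only at a late sales level. Under the actual arrival distribution that level is essentially never reached. A trivial online policy that uses the other product still earns a constant. Since the rounding probabilities $q_i$ are fixed by $\boldsymbol{y}^*$, the ratio $\E[\Rev^{\mathrm{LP}}]/\OPT_{\mathrm{on}}$ is driven to $0$ along the family.

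\textbf{Construction.} Fix an integer $B\ge 2$ and use the common bonus $f(k)=0$ for $k<B$ and $f(B)=M:=2B$. Product $1$ has $b_1=1$ and $r_1=1$, so it earns $1+f(1)=1$. Product $2$ has $b_2=B$ and $r_2=0$, so its cumulative reward is $R_2(n)=0$ for $n<B$ and $R_2(B)=2B$. The horizon is $T=B$ with $p_t=1/B$ for all $t$, so $\mu_L=1$ and $L\sim\mathrm{Bin}(B,1/B)$.

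\textbf{Step 1: solve the LP.} With budget $\mu_L=1$, I will show that the unique optimum is $y^*_{1,1}=0$ and $y^*_{2,k}=1/B$ for all $k\le B$. The monotonicity constraints \eqref{eq:lp-order-inc} mean that product $2$'s mass is a nonincreasing column. Only the entry $y_{2,B}$ earns anything, and $y_{2,B}\le y_{2,k}$ for every $k$, so a unit of budget spent on product $2$ earns at most $M/B=2$. This rate is attained only by a flat column. A unit spent on product $1$ earns $1$. Since $2>1$, an exchange argument shows that every optimal solution places all mass on a flat column of product $2$, which gives value $2$. Hence $q_1=0$ and $q_2=1$. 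The point needing care is uniqueness, because any optimal solution could be the one selected, and the exchange argument handles this. I expect this to be the main technical step, though it is short.

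\textbf{Step 2: evaluate the two policies.} Independent rounding sends every arrival to product $2$. It earns $2B$ only if $L\ge B$, that is, only if all $B$ periods have an arrival, and this happens with probability $B^{-B}$. Hence $\E[\Rev^{\mathrm{LP}}]=2B\cdot B^{-B}\to 0$. The online policy that sends the first arrival to product $1$ earns $\P(L\ge1)=1-(1-1/B)^B\ge 1-1/e$. Therefore $\OPT_{\mathrm{on}}\ge 1-1/e$, and the ratio is at most $2B^{1-B}/(1-1/e)\to0$. This rules out any universal constant $c>0$.

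The family uses a common bonus function and full compatibility, as the surrounding discussion claims. The failure mechanism is exactly the one described before the proposition. The LP controls only the expected number of allocations $\mu_L$, so it happily spreads a thin column up to level $B$. Realized arrivals essentially never reach that level, and aggregating the column into $q_2$ commits every customer to the backloaded product.
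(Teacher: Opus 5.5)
Your proof is correct and follows the same route as the paper. Both use a two-product instance with a common jump bonus, where the LP's per-unit-budget comparison forces all mass onto the backloaded product, so $q_2=1$; independent rounding then almost never completes that product, while sending the first arrival to product~1 earns a constant. The only difference is the scaling: the paper takes $b_2=2$, fixes $M>2$, and lets the second-arrival probability $p\to 0$, whereas you fix $\mu_L=1$ and let $B\to\infty$, so completion has probability $B^{-B}$. One small detail: set $f(k)=2B$ for $k>B$ so that $f$ is nondecreasing on all of $\mathbb Z_+$.
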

Proposition~\ref{prop:lp-fail} shows that the marginal solution of LP \eqref{eq:lp} does not directly yield a constant competitive online algorithm, even relative to the optimal online value.
The proof uses the same two-period, two-product structure as the greedy counterexample with a common bonus function, but takes the opposite limiting regime. For the myopic greedy algorithm, we fix the probability of a second arrival and $f(2)$ become large. For LP-based independent rounding algorithm, we instead fix $f(2)$ and let the probability of a second arrival become small. The LP favors product 2 because its second sale creates high value, and independent rounding converts this total target into per-arrival allocation probabilities. However, when the second arrival is unlikely, the algorithm usually fails to reach that valuable second sale. Consequently, the performance ratio of LP-based independent rounding goes to zero as the second-arrival probability goes to zero.

Together, Propositions~\ref{prop:greedy-fail} and~\ref{prop:lp-fail} show that two standard principles from classical online allocation do not transfer directly to our setting. With increasing marginal rewards, early allocations are not only a source of immediate reward; they also determine whether the algorithm can reach later, higher-value sales volume. Greedy fails by underinvesting in buildup: it compares only current marginal rewards and ignores the future value created by seeding a product. LP-based independent rounding fails in the opposite direction: the LP recognizes high value later sales in expectation, but the independent rounding algorithm may chase this high volume even when the arrival sequence is unlikely to reach it. Thus, an effective algorithm must coordinate buildup while also accounting for the risk that future arrivals may stop before the buildup becomes valuable.

\subsection{Homogeneous Bonuses: A Tight $1/2$ Guarantee}
\label{sec:limit-homo}

The previous failures show that classical algorithms can fail in our setting even when all products share the same bonus function. The basic tension is that an online algorithm must balance current reward against the future value created by building up sales on the same product. In this subsection, we show that, under homogeneous bonuses, this tradeoff can be handled by comparing two simple fixed-order fill algorithms. Throughout this subsection, assume that $f_i\equiv f$ for all products $i$, where $f$ is nonnegative and nondecreasing.
\begin{definition}[Fixed-order fill algorithm]
    Fix an ordering $\sigma=(\sigma_1,\dots,\sigma_m)$ of the products. The fixed-order fill algorithm $\pi^\sigma$ works as follows: whenever a customer arrives, it assigns the customer to the first product in the order $\sigma$ that still has remaining capacity. If all products are full, the customer is rejected.
\end{definition}
A fixed-order fill algorithm coordinates allocations across arrivals by filling one product to capacity before moving to the next product in the order. The arrival probabilities $\{p_t\}_{t=1}^T$ are not used to adapt decisions over time; they are used only ex ante to compute the expected reward of the ordering.

With a common bonus function in place, products differ only in their base rewards $r_i$ and capacities $b_i$. We therefore consider two complementary fixed-order fill algorithms.
\begin{definition}[Two canonical fixed-order fill algorithms]
\begin{enumerate}
    \item \textbf{Base-ordered fill algorithm} $\pi^{\mathrm{base}}$: $\pi^{\mathrm{base}}$ orders products in nonincreasing base reward $r_i$.
    \item \textbf{Capacity-ordered fill algorithm} $\pi^{\mathrm{cap}}$: $\pi^{\mathrm{cap}}$ orders products in nonincreasing capacity $b_i$.
\end{enumerate}
\end{definition}
These two algorithms capture the two sources of value in the homogeneous setting. $\pi^{\mathrm{base}}$ prioritizes immediate rewards, while $\pi^{\mathrm{cap}}$ prioritizes buildup by filling products that can reach larger capacity along the common bonus curve.
Let $C=\sum_{i\in\mathcal I} b_i$ denote the total capacity. Theorem \ref{thm:homo-half-approx} shows that comparing these two policies is enough to obtain a constant guarantee.
\begin{theorem}[Polynomial-time $1/2$-competitive algorithm]
\label{thm:homo-half-approx}
Under full compatibility and homogeneous bonuses, the better of the two canonical fixed-order algorithms can be selected in $O(TC)$ time and satisfies
\[
    \max\left\{
    \mathbb E\left[\Rev^{\pi^{\mathrm{base}}}\right],
    \mathbb E\left[\Rev^{\pi^{\mathrm{cap}}}\right]
    \right\}
    \ge
    \frac12\,\mathbb E[U(L)].
\]
\end{theorem}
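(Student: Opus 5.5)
The plan is to prove a pathwise statement: for every realized arrival count $\ell$,
\[
\Rev^{\pi^{\mathrm{base}}}(\ell)+\Rev^{\pi^{\mathrm{cap}}}(\ell)\ \ge\ U(\ell).
\]
Both policies are deterministic, so their revenue depends only on $\ell$. Taking expectations then gives $\mathbb E[\Rev^{\pi^{\mathrm{base}}}]+\mathbb E[\Rev^{\pi^{\mathrm{cap}}}]\ge \mathbb E[U(L)]$. Since the maximum of two numbers is at least their average, the $1/2$ bound follows.

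Fix $\ell$ and an optimal offline allocation $\boldsymbol x^*$ with $\|\boldsymbol x^*\|_1\le \ell$. Write $F(n)=\sum_{k=1}^n f(k)$, so that
\[
U(\ell)=\sum_i r_i x_i^*+\sum_i F(x_i^*).
\]
I will show that the base part is at most the revenue of $\pi^{\mathrm{base}}$ and the bonus part is at most the revenue of $\pi^{\mathrm{cap}}$. All reward components are nonnegative, so it suffices to compare each part with the corresponding component of the policy's revenue.

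\emph{Base part.} The policy $\pi^{\mathrm{base}}$ serves $\min(\ell,C)$ units, taken from products in nonincreasing $r_i$ up to capacity. This is exactly the greedy optimum of the fractional-knapsack-type problem $\max\sum_i r_i x_i$ subject to $0\le x_i\le b_i$ and $\sum_i x_i\le \ell$. Hence its base revenue is at least $\sum_i r_ix_i^*$, and its bonus revenue is nonnegative.

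\emph{Bonus part (main step).} Let $\boldsymbol y$ be the allocation produced by $\pi^{\mathrm{cap}}$, and sort capacities $b_{(1)}\ge b_{(2)}\ge\cdots$. Then for every $k$, the sum of the $k$ largest entries of $\boldsymbol y$ equals $\min\bigl(\ell,\sum_{j\le k} b_{(j)}\bigr)$. The $k$ largest entries of $\boldsymbol x^*$ sum to at most $\ell$. They also sum to at most the sum of their own capacities, which is at most $\sum_{j\le k}b_{(j)}$. So $\boldsymbol y$ weakly submajorizes $\boldsymbol x^*$.

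Since $f$ is nondecreasing and nonnegative, $F$ has nondecreasing increments and is therefore discrete convex. It is also nondecreasing, with $F(0)=0$. The discrete Karamata / Tomić–Weyl inequality for weak submajorization then yields $\sum_i F(y_i)\ge\sum_i F(x_i^*)$. To avoid a continuous extension, I would verify this step directly via the piecewise-linear interpolation of $F$, or by an exchange argument that moves units from smaller to larger coordinates. I expect this majorization step to be the main obstacle, specifically getting the weak (not exact) majorization version right with integer constraints.

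\emph{Running time.} Compute the distribution of $\min(L,C)$ by the standard Poisson-binomial dynamic program, truncating states at $C$; this takes $O(TC)$ time. For each ordering, the cumulative revenue as a function of the number served, $n=0,\dots,C$, is computed incrementally in $O(C)$ time. Each expected revenue is then an $O(C)$ sum, and we select the larger of the two.
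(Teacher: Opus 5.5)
Your proposal is correct and follows essentially the same route as the paper: a pathwise bound $U(\ell)\le V^{\mathrm{Base}}(\ell)+V^{\mathrm{Bonus}}(\ell)$, with the base part attained by $\pi^{\mathrm{base}}$ via the greedy/exchange argument and the bonus part attained by $\pi^{\mathrm{cap}}$ via majorization and the discrete convexity of $F$, followed by the $O(TC)$ Poisson-binomial recursion to evaluate and compare the two policies. The only difference is cosmetic: the paper restricts to allocations with $\sum_i x_i=\ell$ (for $\ell\le C$) and applies exact majorization with Karamata's inequality, whereas you allow $\|\boldsymbol x^*\|_1\le\ell$ and use weak submajorization with the Tomi\'c--Weyl inequality, which is equally valid because $F$ is also nondecreasing.
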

Theorem~\ref{thm:homo-half-approx} shows that, under homogeneous bonuses, a $1/2$ guarantee can be obtained by comparing two fixed product orders. The proof uses a pathwise decomposition of the offline value. For each realized arrival count, we upper bound the offline value by the sum of two separately optimized components: the best base reward value and the best homogeneous bonus value, which can be attained by $\pi^{\mathrm{base}}$ and $\pi^{\mathrm{cap}}$, respectively. Therefore, the sum of the rewards of the two fixed-order policies dominates the offline benchmark pathwise. Taking expectations and selecting the algorithm with the larger expected reward gives the $1/2$ guarantee.

This result also clarifies why a simple fixed-order approach is sufficient in the homogeneous bonus case. The base-ordered and the capacity-ordered fill algorithms isolate the two sources of value in the offline benchmark: base rewards favor products with large $r_i$, whereas a common bonus function favors products that can support larger sales volume.
Rather than balancing these objectives through local, arrival-by-arrival decisions, the algorithm evaluates the two extreme priorities and commits to the better one in advance. This commitment is simple, but it explicitly coordinates buildup and avoids the purely local logic behind the failures in Section~\ref{sec:limit-fail}.

We next establish a matching lower bound in Theorem~\ref{thm:homo-half-tight}. Thus, the simple comparison between the two fixed-order fill algorithms does not leave a worst-case gap: even under homogeneous bonuses, no online algorithm can guarantee a factor strictly larger than $1/2$ against the offline benchmark.
\begin{theorem}[Tightness of the $1/2$ guarantee]
\label{thm:homo-half-tight}
    Under full compatibility and homogeneous bonuses, there is no online algorithm $\pi$ and constant $c>1/2$ such that $\mathbb E[\Rev^\pi]\ge c\,\mathbb E[U(L)]$ for all problem instances.
\end{theorem}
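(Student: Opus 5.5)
The plan is to prove Theorem~\ref{thm:homo-half-tight} by exhibiting a family of homogeneous-bonus instances on which even the optimal online value $\OPT_{\mathrm{on}}$ is at most $(1/2+o(1))\,\mathbb E[U(L)]$. Since every online algorithm $\pi$ has $\mathbb E[\Rev^\pi]\le\OPT_{\mathrm{on}}$, this rules out any $c>1/2$. I will reuse the two-period, two-product structure from Propositions~\ref{prop:greedy-fail} and~\ref{prop:lp-fail}, and tune the parameters so that the two natural commitments become equally good.

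\textbf{Instance.} Let $p\in(0,1)$ and $M=1/p$. The common bonus is $f(1)=0$ and $f(2)=M$, which is nonnegative and nondecreasing.
\begin{itemize}[label={}]
\item Product 1 has $b_1=1$ and $r_1=1$.
\item Product 2 has $b_2=2$ and $r_2=0$.
\end{itemize}
Take $T=2$ with $p_1=1$ and $p_2=p$, so $L\in\{1,2\}$ and $\P(L=2)=p$.

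\textbf{Offline benchmark.} For $L=1$ the best choice is product 1, so $U(1)=1$. For $L=2$ the candidates are two units on product 2, one unit each, or a single unit on product 1, with values $R_2(2)=M$, $1+0$, and $1$. Hence $U(2)=\max\{M,1\}=M$. Therefore
\[
\mathbb E[U(L)]=(1-p)+pM=2-p .
\]

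\textbf{Optimal online value.} I will bound $\OPT_{\mathrm{on}}$ by enumerating decision trees. The only consequential decision is the first customer's, made before the online algorithm learns whether a second customer comes. Let $q$ be the probability that the first customer goes to product 2 and $s$ the probability that it goes to product 1, with $q+s\le 1$ and rejection taking the rest.
\begin{itemize}[label={}]
\item If the first customer went to product 1, the best continuation earns $0$ more, since only product 2's first sale remains and it is worth $r_2+f(1)=0$.
\item If it went to product 2, the best continuation puts the second customer on product 2, earning $M$ since $M\ge 1$.
\item If it was rejected, the second customer earns at most $1$, which happens only with probability $p$.
\end{itemize}
So the expected reward is at most
\[
s+qpM+(1-s-q)p\le \max\{1,\,pM\}=1 .
\]

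\textbf{Conclusion.} The ratio $\OPT_{\mathrm{on}}/\mathbb E[U(L)]$ is at most $1/(2-p)$. This tends to $1/2$ as $p\to0$, so no fixed $c>1/2$ works for every instance.

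The main obstacle is the calibration. The bad instances for greedy and for LP rounding each made one of the two commitments terrible. Here both must be made equally mediocre, by setting $pM=1$. This forces a single arrival-count-blind first decision to lose essentially one of two equal contributions to the benchmark: the $L=1$ contribution $1-p\approx 1$ and the $L=2$ contribution $pM=1$. The other thing to check is that adaptivity and randomization cannot help. Adaptivity gives nothing because the algorithm observes no information before the first decision. Randomization gives nothing because the objective is linear in the first-step probabilities $(s,q)$.
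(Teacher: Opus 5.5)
Your proof is correct and follows the paper's argument essentially verbatim: the same two-period, two-product instance with $f(2)=1/p$, the same value $\mathbb E[U(L)]=2-p$, and the same observation that both first-customer commitments earn expected reward at most $1$, so the ratio is at most $1/(2-p)\to 1/2$. Your linear bound $s+qpM+(1-s-q)p\le 1$ spells out the rejection and randomization cases that the paper handles in words.
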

Theorem~\ref{thm:homo-half-tight} shows that the $1/2$ guarantee in Theorem~\ref{thm:homo-half-approx} is best possible relative to the offline benchmark. The proof uses the same two-period two-product construction as above, but with a specific scaling: the second customer arrives with probability $p$, and $f(2)$ is set to $1/p$. This scaling makes the rare event of completing the buildup product contribute a constant amount in expectation.
The offline benchmark observes the realized number of arrivals before allocating customers. If only one customer arrives, it allocates the customer to product 1 and earns 1; if two customers arrive, it allocates both customers to product 2 and earns $1/p$. Hence its expected value is $(1-p)\cdot 1+p\cdot(1/p)=2-p$. By contrast, any online algorithm must decide where to allocate the first customer before knowing whether the second customer will arrive. This creates a commitment problem: allocating the first customer to product 1 secures reward $1$ but gives up the possibility of completing product 2. Allocating the first customer to product 2 prepares for the high-value second sale, but earns nothing if the second customer does not arrive. The chosen scaling makes these two commitments equally valuable in expectation: each yields expected reward at most $1$. Rejecting the first customer is weakly worse. Therefore, every randomized first decision has expected reward at most $1$, and allocating the first customer to product 1 achieves this value, so $\OPT_{\mathrm{on}}=1$. Therefore, the online-to-offline ratio is at most $1/(2-p)$, which approaches $1/2$ as $p\to0$.

\subsection{Heterogeneous Bonuses: A $\Theta(1/\sqrt m)$ Barrier}
\label{sec:limit-hete}

In this subsection, we move beyond homogeneous bonuses and show that the homogeneous assumption is essential for the $1/2$ guarantee above. When bonus functions are product-specific, we show that no online algorithm can guarantee a constant fraction of the offline benchmark uniformly over the number of products, even in the full-compatibility setting.

The homogeneous case is tractable because all products share the same bonus curve. In that setting, capacity provides a meaningful proxy for buildup potential: a product with larger capacity allows the algorithm to move further along the same increasing bonus function. This is why comparing a base-ordered fill algorithm with a capacity-ordered fill algorithm is enough to obtain a constant guarantee. Once bonuses are heterogeneous, this proxy breaks down. The value of building up a product depends not only on how many customers the product can absorb, but also on the product-specific reward scale attached to reaching each sales volume.  
This difficulty is not merely a limitation of the two fixed-order fill algorithms. We show in Theorem \ref{thm:lump-decay} that it is fundamental.
\begin{theorem}[Competitive ratio decays with the number of products]
\label{thm:lump-decay}
For each integer $m\ge 2$, there exists a family of instances with
$m$ products, parameterized by $p\in(0,1)$, such that
\[
    \lim_{p\to 0} \frac{\OPT_{\mathrm{on}}(p)}{\mathbb E[U(L)]}
    =
    \frac{\binom{m-1}{\lfloor(m-1)/2\rfloor}}{2^{m-1}}
    \sim
    \sqrt{\frac{2}{\pi(m-1)}},
\]
where $\OPT_{\mathrm{on}}(p)$ and $\mathbb E_p[U(L)]$ denote, respectively, the optimal online value and the expected offline benchmark for the instance with parameter $p$.
In particular, this ratio converges to $0$ as $m\to\infty$.
\end{theorem}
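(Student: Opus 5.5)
We would prove Theorem~\ref{thm:lump-decay} by a lump-sum construction. Each product carries all of its bonus on a single threshold sale, and the arrival probabilities are scaled so that, as $p\to0$, each realization $L=k$ contributes a prescribed weight $q_k$ to the offline benchmark. Taking $q_k=\binom{m-1}{k-1}$, the offline benchmark tends to $\sum_k q_k=2^{m-1}$, while we will argue that any online algorithm can asymptotically collect only one of the $q_k$'s, hence at most $\max_k q_k=\binom{m-1}{\lfloor (m-1)/2\rfloor}$.

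\emph{Construction.} Take $T=m$ periods with $p_1=1$ and $p_t=p$ for $t=2,\dots,m$. Then $\P(L\ge k)=p^{k-1}$ and $\P(L=k)=p^{k-1}(1-p)$ for $k<m$. Product $k\in\{1,\dots,m\}$ has capacity $b_k=k$ and base reward $r_k=0$. Its bonus is $f_k(j)=0$ for $j<k$ and $f_k(k)=V_k:=q_k\,p^{-(k-1)}$, which is nonnegative and nondecreasing, so $R_k(n)=V_k\mathbbm{1}\{n=k\}$.

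\emph{Offline value.} For a realization $L=\ell$, the offline solution can complete any set of products whose capacities sum to at most $\ell$. We will show that $U(\ell)=V_\ell\,(1+O(p))$: product $\ell$ alone is feasible and worth $V_\ell$. Any alternative completes only products $j<\ell$ (if it completes $\ell$ it can complete nothing else), at most $\ell$ of them, each worth at most $\max_j q_j\cdot p^{-(\ell-2)}$, which is $O(p)$ relative to $V_\ell$. Hence
\[
\E[U(L)]=\sum_{k=1}^m \P(L=k)V_k\,(1+O(p))\;\longrightarrow\;\sum_{k=1}^m q_k=2^{m-1}.
\]

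\emph{Online value.} An online algorithm only observes arrivals, and since the periods are fixed and the arrival count is the only uncertainty, a deterministic algorithm is described by a fixed sequence of assignments of the $1$st, $2$nd, \dots\ customer (rejections only delay completion). A randomized algorithm is a mixture of such sequences, and since expected revenue is linear in the mixture, it suffices to bound deterministic sequences. Suppose the sequence completes product $i$ when the $t$-th customer is served, with $t\ge b_i=i$. That completion contributes $\P(L\ge t)V_i=q_i\,p^{t-i}$. Call the first product completed by the sequence $i_1$. Its completion time $t_1$ satisfies $t_1\ge i_1$, so it contributes at most $q_{i_1}$. Any later completed product $j$ is completed at time at least $t_1+j>j$, because the customers used for earlier products cannot count toward $j$. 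Each such completion therefore contributes at most $q_j\,p$. At most $m$ products are completed, so
\[
\OPT_{\mathrm{on}}(p)\le \max_k q_k+O(p).
\]
Conversely, the sequence that fills product $k^*=\lfloor(m-1)/2\rfloor+1$ first attains $\OPT_{\mathrm{on}}(p)\ge q_{k^*}$. Taking the ratio and letting $p\to0$ gives exactly $\binom{m-1}{\lfloor(m-1)/2\rfloor}/2^{m-1}$. Stirling's formula for the central binomial coefficient then gives the asymptotic $\sqrt{2/(\pi(m-1))}$, which tends to $0$ as $m\to\infty$.

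The main obstacle is the online upper bound. Its key point is that committing customers to one product forces every other product to complete strictly after its own capacity level, which costs a factor of $p$ per wasted arrival. The argument must make this rigorous for adaptive randomized policies, including policies that reject customers or that split early customers across products before finishing any of them. The fixed-sequence reduction handles all of these cases: splitting only delays every completion, so it can only lower each term $q_i\,p^{t-i}$. We need to check this reduction carefully.
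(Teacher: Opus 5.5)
\textbf{There is a genuine gap, and it starts with the arrival distribution.} In the paper's model, arrivals in periods $2,\dots,m$ are independent. So $L=1+Z$ with $Z\sim\mathrm{Binomial}(m-1,p)$, and
$\P(L\ge k)=\P(Z\ge k-1)=\binom{m-1}{k-1}p^{k-1}+O(p^{k})$, not $p^{k-1}$. Your formula describes a horizon in which each arrival can occur only if the previous one did. That is not a Poisson-binomial instance, and it is not what ``$p_1=1$, $p_t=p$'' produces.

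The arrival process therefore already generates the binomial coefficients, and setting $q_k=\binom{m-1}{k-1}$ in the rewards counts them twice. With $V_k=\binom{m-1}{k-1}p^{-(k-1)}$:
\begin{itemize}
\item $\P(L=k)V_k\to\binom{m-1}{k-1}^2$, so $\E[U(L)]\to\binom{2m-2}{m-1}$;
\item filling product $k$ first earns $\P(L\ge k)V_k\to\binom{m-1}{k-1}^2$.
\end{itemize}
Your family thus has limiting ratio $\binom{m-1}{\lfloor(m-1)/2\rfloor}^2/\binom{2m-2}{m-1}$. At $m=3$ this equals $2/3$, not $1/2$, and in general it behaves like $2/\sqrt{\pi(m-1)}$. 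You still get $\Theta(1/\sqrt m)$ decay, but not the identity the theorem claims. The fix is $q_k\equiv 1$, i.e.\ $V_k=p^{-(k-1)}$, which is exactly the paper's instance. It gives $\E[U(L)]=(2-p)^{m-1}$ and $\P(L\ge k)V_k\to\binom{m-1}{k-1}$. Your lump-sum, first-commitment strategy otherwise matches the paper's.

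\textbf{The online upper bound also has to be redone for this model.} There are two problems.
\begin{itemize}
\item The algorithm observes the period in which each customer arrives, and that period reveals how many potential arrivals remain. For example, a customer arriving in period $m$ is known to be the last. A deterministic policy is therefore a decision tree over arrival histories, not a fixed sequence of assignments, so your reduction does not cover all policies.
\item Your claim that a later-completed product $j$ finishes at customer count at least $t_1+j$ is false, because customers given to $j$ before $t_1$ count toward it. Only $t_j\ge j+i_1>j$ holds.
\end{itemize}

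The paper avoids both issues by conditioning only on where the first customer goes. That choice is made in period $1$, before any information arrives, so it is independent of $Z$.
\begin{itemize}
\item The product $j$ receiving the first customer completes only if $Z\ge j-1$, so it contributes at most $\P(Z\ge j-1)v_j\to\binom{m-1}{j-1}$.
\item Any other product $i$ needs $Z\ge i$, so it contributes $\P(Z\ge i)v_i=O(p)$.
\end{itemize}
Averaging over the possibly randomized first choice bounds every adaptive policy by $\max_j\binom{m-1}{j-1}+O(p)$. The single-product fill policies give the matching lower bound.
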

Theorem~\ref{thm:lump-decay} establishes that heterogeneous bonuses can destroy any constant guarantee as the number of products grows. The proof uses a family of lump-sum products with zero base rewards and heterogeneous capacities. The arrival process has one initial customer who arrives for sure, followed by $m-1$ independent potential arrivals, each occurring with probability $p$. Each product earns reward only when it is filled to capacity, so partial buildup has no value. In the construction, products with larger capacities carry larger lump-sum rewards, and these rewards are scaled by powers of $1/p$ so that high arrival realizations remain relevant to the offline benchmark even though they are rare when $p$ is small.

The key difference from the homogeneous case is that the realized number of arrivals determines which product is most valuable to complete. With a common bonus function, buildup value can be summarized by a common feature, such as capacity. With heterogeneous lump-sum bonuses, each product is designed for a different capacity, choosing which product to build is effectively choosing how many arrivals the algorithm expects to see. In this sense, heterogeneity turns the problem into arrival-count guessing. 
The construction makes this guessing problem combinatorial. The $m-1$ uncertain arrivals can be viewed as biased coin tosses. If exactly $i$ of them occur, then the total number of arrivals is $i+1$, and the offline benchmark completes the product with capacity $i+1$. The lump-sum reward of this product is scaled on the order of $p^{-i}$, which offsets the probability order $p^i$ of seeing exactly $i$ uncertain arrivals. Hence each realized arrival pattern contributes on the same leading scale. Since there are $\binom{m-1}{i}$ patterns with exactly $i$ successful arrivals, the offline benchmark collects leading-order value from all arrival counts, with total contribution proportional to $\sum_{i=0}^{m-1}\binom{m-1}{i}=2^{m-1}$. An online algorithm, however, must choose which product to invest before the uncertain arrivals are realized. In the construction, targeting one capacity effectively targets one arrival count. The best possible target is therefore the largest combinatorial class, whose size is $\binom{m-1}{\lfloor(m-1)/2\rfloor}$. This gives the $\Theta(1/\sqrt m)$ decay in the online-to-offline ratio.

\section{Algorithms Under Nondecreasing Discrete Concave Bonuses}
\label{sec:main}

In this section, we show that constant guarantees can be recovered for a broad structured class of heterogeneous bonus functions. 
The results in Section~\ref{sec:limit-hete} shows that arbitrary heterogeneous bonuses are too broad to admit strong guarantees. The difficulty comes from backloaded bonuses: a product can make most of its value appear only after reaching a late sales level, and this critical sales level can differ across products. As a result, an online algorithm must effectively guess the realized arrival count in order to decide which product to build. To obtain positive guarantees, we therefore impose structure on how bonuses build up with sales. Specifically, we focus on heterogeneous bonus functions that are nondecreasing and discrete concave.
\begin{assumption}[Nondecreasing discrete concavity bonuses]
\label{assump:discrete-concavity}
For each $i=1,2,\cdots,m$, $f_i$ is nonnegative, nondecreasing, and discrete concave on $\{0,1,\dots,b_i\}$:
\[
f_i(k+1)-f_i(k)\le f_i(k)-f_i(k-1),\qquad k=1,\dots,b_i-1.
\]
\end{assumption}
Assumption~\ref{assump:discrete-concavity} captures settings in which accumulated sales increase the economic value of subsequent sales, while the incremental benefit generated by each additional sale is constant or gradually diminishes. This structure is not specific to our model. Increasing concave buildup arises naturally in several well-established economic mechanisms and has been widely used in the corresponding literatures. We highlight three representative examples below. Depending on the application, the bonus function $f_i$ may represent an increase in consumer valuation or willingness to pay, an increase in contribution margin, or a reduction in production or service cost induced by previous sales.
\begin{itemize}
    \item \textit{Network externality and installed-base effects:} $f_i(k)=a_i k$, or more generally an increasing concave function such as the logarithmic form $f_i(k)=a_i\log(1+k)$. A standard feature of markets with network effects is that a larger installed base raises the value of the product to subsequent users. \cite{economides1996network} models willingness to pay as increasing with the size of the installed base and considers both linear and concave specifications of network benefits. The linear specification is the boundary case of discrete concavity, with constant marginal benefits, while concave specifications capture saturation as the installed base expands. 

    \item \textit{Learning by doing and quantity economies:} $f_i(k)=a_i\left[1-(1+k)^{-\beta_i}\right]$, for some $\beta_i>0$. The learning-by-doing literature models unit labor requirements or production costs as declining with cumulative experience or output; classical formulations include \cite{wright1936factors} and \cite{arrow1962economic}. Expressing the resulting cost reduction as an increase in the value generated by a sale yields an increasing concave bonus: additional experience continues to improve productivity, but the incremental improvement diminishes as experience accumulates. Closely related structures arise from other quantity-based economies, including procurement economies, volume discounts, and fulfillment scale economies. 

    \item \textit{Social learning and Bayesian updating:} $f_i(k)=a_ik/(k+c_i)$, for some $a_i,c_i>0$. Social learning captures settings in which accumulated sales provide information about product quality and thereby affect the valuation of subsequent customers. Under standard Bayesian updating, the weight placed on accumulated information increases with the number of observations but at a diminishing rate, giving rise to increasing concave forms such as the one above. For example, \cite{liu2026broadcast} study online sales in which customers infer product quality from the number of previous purchases. When a product's true quality exceeds consumers' prior belief, accumulated sales progressively raise perceived quality, while each additional observation has a smaller effect as information accumulates. Similar mechanisms arise in models of observational learning, reviews, ratings, reputation, and herding.  
\end{itemize}

Together, these examples illustrate that increasing discrete concavity is a common reduced-form structure for cumulative economic effects on both the demand and supply sides. Concavity reflects saturation in the underlying mechanism: additional adoption generates progressively smaller network benefits, learning curves flatten as experience accumulates, and Bayesian beliefs become less responsive as more information is observed. Similar patterns can arise from other cumulative mechanisms, including data accumulation, referrals, and collaborative production. Beyond this discrete concavity, our assumption does not impose any particular functional form; it only requires this diminishing-increment structure and allows both the magnitude and shape of the buildup effect to vary freely across products.

This structure rules out the extreme lump-sum obstruction, but it does not make the classical local algorithms reliable. In Appendix~\ref{app:concave-proofs-fail}, we revisit the myopic greedy and LP-based independent rounding algorithms from Section~\ref{sec:limit-fail} and show that both can still fail under Assumption~\ref{assump:discrete-concavity}. Thus, obtaining positive guarantees requires an algorithm that explicitly coordinates buildup while remaining robust to uncertainty in future arrivals.

\subsection{The Intermediate Target Repair Algorithm}
\label{sec:main-alg}

In this subsection, we develop the intermediate target repair algorithm for the online allocation problem under Assumption~\ref{assump:discrete-concavity}. The preceding negative results suggest that an effective algorithm must balance look ahead with robustness to early stopping. The failures of greedy and LP-based independent rounding show that neither purely local decisions nor purely ex-ante marginal planning is reliable: the algorithm must coordinate buildup toward future high-value sales levels, but it cannot simply chase high-volume value suggested by an ex-ante relaxation. On the other hand, the heterogeneous lump-sum lower bound further shows that, with product-specific bonuses, committing to a single distant served-count level is risky because different realized arrival counts may favor different products. 

The intermediate target repair algorithm balances these forces through served-count milestones. Whenever the algorithm reaches such a milestone, it looks ahead to a moderately larger served-count level and computes a corresponding offline optimal allocation as an intermediate target. This target is far enough ahead to encourage buildup, but not so far that the planned repair is out of reach. The algorithm then forms the repair increments needed to move its current allocation toward this target. The key issue is what happens if arrivals stop before all of these repair increments are made. The discrete concavity assumption makes this approach viable: a high-value buildup cannot derive almost all of its value from the last few sales, so the planned repair increments can be arranged such that even an initial segment already captures a controlled fraction of the full incremental value.

We first formalize the structural prefix-value property that makes this idea possible. Fix a feasible base allocation $\boldsymbol s=(s_i)_{i\in\mathcal I}$, where $s_i$ is the number of customers already allocated to product $i$. A feasible repair increment from $\boldsymbol s$ is a vector $\boldsymbol d=(d_i)_{i\in\mathcal I}\in\mathbb Z_+^m$ such that $0\le d_i\le b_i-s_i$ for every product $i$. The vector $\boldsymbol d$ specifies how many additional assignments are planned for each product, and its full incremental value from $\boldsymbol s$ is $V(\boldsymbol s+\boldsymbol d)-V(\boldsymbol s)$.
Let $D=\|\boldsymbol d\|_1$. A repair increment generally admits many repair orders, since it specifies only how many additional assignments go to each product, not the order in which these assignments are made. A repair order for $\boldsymbol d$ is a sequence $\eta=(\eta_1,\dots,\eta_D)$ of products containing exactly $d_i$ copies of product $i$. The interpretation is that $\eta_\tau=i$ means that the $\tau$-th additional assignment is made to product $i$. For each prefix length $h=0,1,\dots,D$, define the prefix increment $\boldsymbol d^{(h)}=(d_i^{(h)})_{i\in\mathcal I}$ induced by the first $h$ allocations in $\eta$ as $d_i^{(h)}:=\left|\{\tau\le h:\eta_\tau=i\}\right|$ for $i\in\mathcal I$.
Thus, $\boldsymbol{d}^{(0)}=\boldsymbol{0}$, and $\boldsymbol d^{(D)}=\boldsymbol d$. After executing the first $h$ assignments in the repair order, the resulting allocation is $\boldsymbol{s}+\boldsymbol{d}^{(h)}$, and the accumulated incremental value is $V(\boldsymbol{s}+\boldsymbol{d}^{(h)})-V(\boldsymbol{s})$. The next lemma shows that one can choose a repair order whose prefixes all have controlled incremental value.
\begin{lemma}[Quadratic prefix guarantee for repair increments]
\label{lemma:quad-prefix}
Fix a feasible base allocation $\boldsymbol s$. For any repair increment $\boldsymbol d$ from $\boldsymbol s$ with $D=\|\boldsymbol d\|_1\ge 1$, there exists a repair order $\eta=(\eta_1,\dots,\eta_D)$ such that, for every $h=0,1,\dots,D$, the prefix increment $\boldsymbol d^{(h)}$ induced by the first $h$ allocations in $\eta$ satisfies
\[
V(\boldsymbol s+\boldsymbol d^{(h)})-V(\boldsymbol s)
\ge \left(\frac{h}{D}\right)^2\bigl(V(\boldsymbol s+\boldsymbol d)-V(\boldsymbol s)\bigr).
\]
Moreover, such an order can be constructed by reverse deletion. Initialize a working copy $\boldsymbol d^{\mathrm{rem}}=\boldsymbol d$ and an empty deletion list. While $\|\boldsymbol d^{\mathrm{rem}}\|_1>0$, choose a product $i$ with $d_i^{\mathrm{rem}}>0$ that minimizes $V(\boldsymbol s+\boldsymbol d^{\mathrm{rem}})-V(\boldsymbol s+\boldsymbol d^{\mathrm{rem}}-\boldsymbol e_i)$, delete one unit of product $i$ from $\boldsymbol d^{\mathrm{rem}}$, and append $i$ to the deletion list. Reversing the deletion list gives the repair order $\eta$.
\end{lemma}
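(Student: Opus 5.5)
The plan is to analyze the reverse-deletion order directly. Define the incremental value $g(\boldsymbol d')=V(\boldsymbol s+\boldsymbol d')-V(\boldsymbol s)$ for $\boldsymbol 0\le\boldsymbol d'\le\boldsymbol d$. Let $\boldsymbol d^{(H)}$ be the working vector when $\|\boldsymbol d^{\mathrm{rem}}\|_1=H$. By construction, these working vectors are exactly the prefix increments of the reversed list $\eta$. It therefore suffices to prove the one-step inequality
\[
g(\boldsymbol d^{(H-1)})\ \ge\ \Bigl(\tfrac{H-1}{H}\Bigr)^2 g(\boldsymbol d^{(H)}),\qquad H=1,\dots,D .
\]
Multiplying these inequalities from $H=h+1$ up to $D$ telescopes to $g(\boldsymbol d^{(h)})\ge (h/D)^2 g(\boldsymbol d)$. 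All marginal rewards $w_i(k)=r_i+f_i(k)$ are nonnegative, so $g\ge 0$ throughout, and the case $g=0$ is trivial. Since $((H-1)/H)^2=1-(2H-1)/H^2$, the one-step inequality is equivalent to showing that the greedy deletion removes at most $\frac{2H-1}{H^2}\,g(\boldsymbol d')$, where $\boldsymbol d'=\boldsymbol d^{(H)}$.

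The deletion removes the minimum last-unit marginal $\Delta_i=w_i(s_i+d'_i)$ over products with $d'_i>0$. I bound this minimum by a weighted average. With weights $d'_i/H$,
\[
\min_i \Delta_i\le \frac1H\sum_i d'_i\Delta_i .
\]
So it is enough to show $\sum_i d'_i\Delta_i\le (2-1/H)\,g(\boldsymbol d')$. Because $g(\boldsymbol d')=\sum_i G_i$ with $G_i=\sum_{k=s_i+1}^{s_i+d'_i}w_i(k)$, this reduces to a per-product estimate.

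The per-product estimate is the crux, and it is where Assumption~\ref{assump:discrete-concavity} enters. Fix a product with $n=d'_i\ge1$, $s=s_i$ and $N=s+n$. The base part contributes exactly $n r_i$ to both $n\Delta_i$ and $G_i$, so only the bonus part needs work. Since $f_i$ is discrete concave with $f_i(0)=0$, the ratio $f_i(k)/k$ is nonincreasing, so $f_i(k)\ge (k/N)f_i(N)$ for $k\le N$. Summing over $k=s+1,\dots,N$ gives
\[
\sum_{k=s+1}^{N} f_i(k)\ \ge\ f_i(N)\,\frac{n(2s+n+1)}{2N}\ \ge\ f_i(N)\,\frac{n+1}{2}.
\]
The last step reduces to $sn\ge s$, which holds because $n\ge 1$. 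Hence $n f_i(N)\le \frac{2n}{n+1}\sum_k f_i(k)$, and combining with the base part yields $n\Delta_i\le \bigl(2-\tfrac{2}{n+1}\bigr)G_i$. Finally, $n\le H$ gives $\frac{2}{n+1}\ge \frac{2}{H+1}\ge \frac1H$ for every $H\ge1$. Therefore $n\Delta_i\le (2-1/H)G_i$; summing over $i$ and applying the averaging bound gives the one-step inequality.

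The main obstacle is precisely this per-product "no extreme backloading" bound with the sharp constant $2-1/H$ rather than $2$. A constant of $2$ would only yield a factor $1-2/H$, which does not telescope to $(h/D)^2$. The two ingredients that make it work are the star-shapedness $f_i(k)/k$ nonincreasing, which follows from concavity plus $f_i(0)=0$, and the slack coming from $n\le H$. The nonnegativity of $f_i$ and $r_i$ is also used, to keep $g\ge0$ and the averaging step valid.
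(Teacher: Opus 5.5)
Your proof is correct and follows the paper's argument: the same per-product bound (the last unit carries at most a $2/(n+1)$ share of that product's increment, via $f_i(k)/k$ nonincreasing), an averaging argument showing that the minimum-loss deletion removes little, and a telescoping product. The only difference is cosmetic: you average with weights $d'_i/H$ and aim directly for the per-step factor $((H-1)/H)^2$, whereas the paper averages with weights $d_i+1$ to get the factor $(H-1)/(H+1)$, which gives the slightly stronger intermediate bound $h(h+1)/(D(D+1))$.
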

Lemma~\ref{lemma:quad-prefix} shows that Assumption~\ref{assump:discrete-concavity} creates valuable prefixes for every feasible repair increment. In particular, if the full repair increment $\boldsymbol d$ generates incremental value $V(\boldsymbol{s}+\boldsymbol{d})-V(\boldsymbol{s})$, then there is a repair order under which the first $h$ out of $D$ additional assignments already capture at least a $(h/D)^2$ fraction of this value. 
Thus, incremental value cannot be arbitrarily concentrated in the last few assignments of a repair increment. Even if only an initial segment of the repair order is executed, the executed prefix still has controlled value. This prefix property is what allows the algorithm to obtain constant guarantees despite uncertainty in the realized number of arrivals.

We now turn the prefix property into an online algorithm. Fix a lookahead multiplier $\alpha>1$. The algorithm replans only at served-count milestones, not at calendar times. After the first served customer, the milestones grow geometrically: $1,(1+\alpha),(1+\alpha)^2,\ldots$, truncated at the total capacity $C=\sum_{i\in\mathcal I} b_i$.
Consider a phase that starts with allocation $\boldsymbol s$ and served count $S=\|\boldsymbol s\|_1$. The next milestone is \(M=\min\{(1+\alpha)S,C\},\) and the algorithm looks ahead to the intermediate target level $K=\min\{\alpha S,C\}$. It computes an offline optimizer $\boldsymbol x^K$ attaining $U(K)$. For a vector $\boldsymbol{z}$, let $\boldsymbol{z}^+$ denote its coordinatewise positive part. The algorithm then forms the coordinatewise repair vector $\boldsymbol{d}=(\boldsymbol{x}^K-\boldsymbol{s})^+$. Completing this repair makes the current allocation coordinatewise dominate $\boldsymbol x^K$, and hence achieves value at least $U(K)$ by monotonicity. The algorithm then orders the units of $\boldsymbol d$ using Lemma~\ref{lemma:quad-prefix} and allocates arrivals in the phase according to this order. If the phase reaches the next milestone and capacity remains, the algorithm replans from the updated allocation.

The intermediate target \(K\) is chosen to balance lookahead and robustness. A target too far ahead may be left unfinished if arrivals stop early, whereas a target too close may not build enough scale. In an untruncated phase, the phase length is \(M-S=\alpha S\), while the repair size satisfies \(D=\|\boldsymbol d\|_1\le K=\alpha S\). Thus, if the phase completes, the repair can be completed before reaching the next milestone. If arrivals stop earlier, the algorithm has executed only a prefix of the repair order, and Lemma~\ref{lemma:quad-prefix} guarantees that this prefix already captures a controlled fraction of the full repair value.

Below we state the algorithm in a normalized integral form, where
\(\alpha S\) and \((1+\alpha)S\) are treated as integer served-count milestone at each phase start. This is automatic for integer \(\alpha\); our main deterministic guarantee uses the integral choice \(\alpha=2\). 
\begin{algorithm}[ht!]
\caption{\textsc{intermediate target Repair}$(\alpha)$}
\label{alg:intermediate-repair}
\begin{algorithmic}[1]
\Require Capacities $\boldsymbol b$, reward functions $R_i(\cdot)$, multiplier $\alpha>1$.
\Ensure Final allocation vector $\boldsymbol s$.

\State $C\gets \sum_i b_i$ and $\boldsymbol s\gets \boldsymbol 0$.
\State Await the first arrival; if none occurs, return $\boldsymbol s$.
\State Assign it to some $i^\star\in\argmax_{i:b_i>0}R_i(1)$ and set $s_{i^\star}\gets 1$.

\While{$\|\boldsymbol s\|_1<C$}
    \State $S\gets \|\boldsymbol s\|_1$, \quad
           $M\gets \min\{(1+\alpha)S,C\}$, \quad
           $K\gets \min\{\alpha S,C\}$.
    \State Compute
    \(
        \boldsymbol x^K\in
        \argmax_{\substack{0\le x_i\le b_i\\ \|\boldsymbol x\|_1=K}}
        \sum_i R_i(x_i).
    \)
    \State Set $\boldsymbol d\gets(\boldsymbol x^K-\boldsymbol s)^+$ and $D\gets\|\boldsymbol d\|_1$.
    \State Choose a repair order $\eta=(\eta_1,\dots,\eta_D)$ for $\boldsymbol d$
           satisfying Lemma~\ref{lemma:quad-prefix}.
    \State $h\gets 0$.
    \While{$\|\boldsymbol s\|_1<M$}
        \State Await the next arrival; if none occurs, return $\boldsymbol s$.
        \If{$h<D$}
            \State $h\gets h+1$ and $i\gets \eta_h$.
        \Else
            \State Choose any product $i$ with $s_i<b_i$.
        \EndIf
        \State Assign the arrival to product $i$ and set $s_i\gets s_i+1$.
    \EndWhile
\EndWhile

\State \Return $\boldsymbol s$.
\end{algorithmic}
\end{algorithm}

\subsection{Deterministic Repair Guarantee}
We now state the theoretical guarantee for Algorithm~\ref{alg:intermediate-repair}. The theorem gives a pathwise guarantee: for every realized served count, the algorithm earns a constant fraction of the corresponding offline value. For $\alpha>1$ and $\theta\in[1,1+\alpha]$, define the phase-position certificate
\[
H_\alpha(\theta):=(\theta-1)^2\min\left\{\frac{1}{\alpha^2},\frac{1}{\theta^2}\right\}+\frac{\alpha^2-(\theta-1)^2}{(1+\alpha)^2\theta^2}.
\]
\begin{theorem}[Deterministic repair guarantee]
\label{thm:det-repair}
Under Assumption~\ref{assump:discrete-concavity} and full compatibility, for every $\alpha>1$ the intermediate target repair algorithm (Algorithm~\ref{alg:intermediate-repair}) satisfies, for every realized arrival count $\ell$, $$\Rev^\pi(\ell)\ge c_{\det}(\alpha)U(\ell), \qquad c_{\det}(\alpha):=\min_{\theta\in[1,1+\alpha]}H_\alpha(\theta).$$
For $\alpha=2$, this gives $c_{\det}(2)=0.2466$. Optimizing over $\alpha$ gives $\sup_{\alpha>1}c_{\det}(\alpha)=0.2467$ at $\alpha\approx 1.9232$. Consequently, the online algorithm Algorithm \ref{alg:intermediate-repair} achieves $\E[\Rev^\pi(L)]\ge 0.2466\,\E[U(L)]\ge 0.2466\,\OPT_{\mathrm{on}}$.
\end{theorem}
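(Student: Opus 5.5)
The plan is to fix a realized arrival count $\ell$ and look at the phase in which the algorithm stops. I will show that the collected value is a convex combination of the repair target value $U(K)$ and the phase-start value $V(\boldsymbol s)$, and then convert both into multiples of $U(\ell)$ with a scale-smoothness inequality. The two resulting coefficients are exactly the two terms of $H_\alpha(\theta)$.

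\textbf{Step 1 (scale smoothness).} I first derive from Lemma~\ref{lemma:quad-prefix}, applied with base $\boldsymbol s=\boldsymbol 0$ and $\boldsymbol d=\boldsymbol x^\ell$ an optimizer for $U(\ell)$, that for all $k\le \ell$
\[
U(k)\ge \left(\frac{k}{\ell}\right)^2 U(\ell).
\]
The prefix of length $k$ is feasible for $U(k)$, and $U$ is monotone since $V$ is monotone. When $\ell\ge C$ I use $U(\ell)=U(C)$, so I may assume $\ell\le C$; served counts beyond $C$ add nothing to either side.

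\textbf{Step 2 (phase-start invariant).} By induction over phases, I show that at the start of every phase with served count $S$,
\[
V(\boldsymbol s)\ge \frac{\alpha^2}{(1+\alpha)^2}U(S).
\]
For the first phase, $S=1$ and $V(\boldsymbol s)=\max_i R_i(1)=U(1)$. For later phases, the previous phase started at $S'=S/(1+\alpha)$ and was completed. Its repair length is $D\le \alpha S'$, which fits in the phase length $\alpha S'$, so the current $\boldsymbol s$ coordinatewise dominates $\boldsymbol x^{\alpha S'}$. Hence
\[
V(\boldsymbol s)\ge U\!\left(\tfrac{\alpha}{1+\alpha}S\right)\ge \frac{\alpha^2}{(1+\alpha)^2}U(S)
\]
by Step 1. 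The truncated cases, where $K$ or $M$ equals $C$, only help, because then the target is the global optimum $U(C)\ge U(\ell)$.

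\textbf{Step 3 (stopped phase).} Suppose the algorithm stops at $\ell=\theta S$ with $\theta\in[1,1+\alpha]$. Then $h=\min\{(\theta-1)S,D\}$ repair steps have been executed, and I set $t=h/D\ge(\theta-1)/\alpha$, using $D\le\alpha S$. By Lemma~\ref{lemma:quad-prefix} and $V(\boldsymbol s+\boldsymbol d)\ge U(K)$,
\[
\Rev\ge t^2U(K)+(1-t^2)V(\boldsymbol s).
\]
If $U(K)\ge V(\boldsymbol s)$ this bound is increasing in $t$, so I may replace $t$ by $t_0=(\theta-1)/\alpha$. If instead $V(\boldsymbol s)>U(K)$, then $\Rev\ge V(\boldsymbol s)$ already dominates the $t_0$-combination. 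Steps after the repair is complete only add value, since rewards are nonnegative.

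Next I bound $U(K)$. If $\ell\le K=\alpha S$, then $U(K)\ge U(\ell)$. Otherwise Step 1 gives $U(K)\ge (\alpha/\theta)^2U(\ell)$. Together these give
\[
t_0^2U(K)\ge (\theta-1)^2\min\{1/\alpha^2,1/\theta^2\}\,U(\ell).
\]
For the second term, Step 2 and Step 1 give
\[
V(\boldsymbol s)\ge \frac{\alpha^2}{(1+\alpha)^2}\cdot\frac{1}{\theta^2}U(\ell),
\]
and multiplying by $1-t_0^2=(\alpha^2-(\theta-1)^2)/\alpha^2$ yields exactly the second term of $H_\alpha(\theta)$. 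Thus $\Rev\ge H_\alpha(\theta)U(\ell)\ge c_{\det}(\alpha)U(\ell)$.

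\textbf{Step 4 (numerics and conclusion).} The remaining step is a one-dimensional minimization of $H_2$ over $\theta\in[1,3]$, which gives $0.2466$, and a scan over $\alpha$, which gives the supremum near $\alpha\approx1.9232$. Taking expectations and using $\OPT_{\mathrm{on}}\le\E[U(L)]$ finishes the proof.

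I expect the main obstacle to be bookkeeping rather than ideas. Three points need care: verifying the invariant in Step 2 across phases with integer and truncation effects (for example $K=C$ or $M=C$ mid-phase); the monotonicity argument that allows replacing $t$ by $t_0$; and the fact that the lemma's order is computed relative to the current $\boldsymbol s$, not $\boldsymbol 0$.
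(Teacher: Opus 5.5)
Your proposal is correct and follows the paper's proof essentially step for step. You use scale smoothness via the prefix lemma with zero base, the milestone invariant $V(\boldsymbol s)\ge(\alpha/(1+\alpha))^2U(S)$ by induction, and the in-phase convex combination $t_0^2U(K)+(1-t_0^2)V(\boldsymbol s)$ bounded termwise to give $H_\alpha(\theta)$; your separate case $V(\boldsymbol s)>U(K)$ is in fact vacuous, since $V(\boldsymbol s)\le U(S)\le U(K)$, which is exactly how the paper justifies replacing $h/D$ by $h/(\alpha S)$.
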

Theorem~\ref{thm:det-repair} shows that nondecreasing discrete-concave bonuses admit a constant online guarantee against the offline benchmark. The proof, including initialization and rounding details, is given in Appendix~\ref{app:concave-proofs}. 

We summarize the main proof idea here. The analysis tracks the algorithm in two situations: when a phase reaches its next milestone, and when arrivals stop inside a phase.
First, consider a phase that starts at served-count milestone $S$ and reaches the next milestone $(1+\alpha)S$. During this phase, the algorithm targets the intermediate level $\alpha S$. The repair increment constructed at the beginning of the phase has size at most $\alpha S$, so reaching the next milestone is enough to complete it. Once the repair increment is completed, the algorithm's allocation coordinatewise dominates an offline target allocation at level $\alpha S$, and hence its value is at least $U(\alpha S)$. To compare this value with the offline value at the next milestone $(1+\alpha)S$, we apply Lemma~\ref{lemma:quad-prefix} with the zero base allocation to an optimizer of $U((1+\alpha)S)$. Taking the first $\alpha S$ units in the guaranteed order gives
\[
U(\alpha s)\ge \left(\frac{\alpha}{1+\alpha}\right)^2U((1+\alpha)s).
\]
Therefore, whenever the algorithm reaches the next milestone, its value is at least $\left(\alpha/(1+\alpha)\right)^2$ times the offline value at that milestone.
Second, suppose arrivals stop before the next milestone. Let the realized served count be $\theta S$ for some $\theta\in[1,1+\alpha]$. In this case, the algorithm may execute only an initial segment of the repair order. The value already accumulated at the starting milestone is controlled by the milestone guarantee above, while the value accumulated during the phase is controlled by Lemma~\ref{lemma:quad-prefix}. The geometric milestone rule makes this comparison scale-free: every phase has the same relative structure, from the starting milestone $S$ to the intermediate level $\alpha S$ and then to the next milestone $(1+\alpha)S$. Hence, the worst-case loss inside a phase depends only on the relative position $\theta$, not on the absolute milestone $S$.

Combining the milestone comparison with the within-phase comparison gives the phase-position certificate ${\Rev^\pi(\theta s)}/{U(\theta s)}\ge H_\alpha(\theta)$.
Taking the minimum over $\theta\in[1,1+\alpha]$ gives $c_{\det}(\alpha)$. Intuitively, the guarantee comes from two sources of protection. At the start of a phase, the algorithm is protected by the value already accumulated at the milestone. As the phase progresses, it is increasingly protected by the prefix value of the repair order being executed. The clean integer choice $\alpha=2$ gives $c_{\det}(2)=0.2466$, and optimizing the one-dimensional expression over $\alpha>1$ improves the bound only slightly to $0.2467$.

\subsection{Randomized Log-Shifted Repair Algorithm}
In this subsection, we propose a randomized version of the intermediate target repair Algorithm. The goal is to improve the scale-normalized guarantee by reducing the loss caused by fixed alignment between realized served count and the deterministic milestone grid.

The deterministic analysis in Theorem~\ref{thm:det-repair} certifies the algorithm's performance according to where the realized served count stops inside a phase. For a phase that starts at milestone $S$, if realized served count stops at $\theta S$, then the certified ratio is at least $H_\alpha(\theta)$. The deterministic guarantee takes the minimum of this certificate over all relative positions $\theta\in[1,1+\alpha]$. This can be conservative because $H_\alpha(\theta)$ is not constant. For example, when $\alpha=2$, the endpoints are relatively well protected: $H_2(1)=H_2(3)=4/9=0.4444$, while the minimum is $c_{\det}(2)=0.2466$ and occurs at an interior phase position. Thus, part of the deterministic loss comes from worst-case alignment between the realized served count and the fixed milestone grid.

The randomized version keeps the repair rule unchanged and randomizes only the placement of the geometric served-count milestones. Fix the multiplier $\alpha>1$. Instead of using the deterministic grid $1,(1+\alpha),(1+\alpha)^2,\ldots$, draw a random shift $\Xi\sim\mathrm{Unif}[0,1)$ and use the shifted geometric grid $S_j(\Xi)=(1+\alpha)^{j+\Xi}$, for $j\in\mathbb Z$.
At each milestone $S_j(\Xi)$, the algorithm uses the same phase construction as Algorithm~\ref{alg:intermediate-repair}.
Thus the randomized algorithm changes only where phases begin and end.

Fix a realized served count $\ell\ge1$. For each realization of shift $\Xi$, let $j$ be the unique integer such that $S_j(\Xi)\le \ell<S_{j+1}(\Xi)$. Define the relative phase position by $\theta(\Xi)=\ell/S_j(\Xi)$. Taking logarithms gives
\(
\log_{1+\alpha}\theta(\Xi)=\log_{1+\alpha}\ell-\log_{1+\alpha}s_j(\Xi)=\log_{1+\alpha}\ell-(j+\Xi).
\)
This quantity is the log-scale distance from the left endpoint of the containing interval to $\log_{1+\alpha}\ell$. Since $\Xi$ is uniform over one grid spacing, this distance is uniform on $[0,1)$. Hence $\log_{1+\alpha}\theta(\Xi)\sim\mathrm{Unif}[0,1)$, or equivalently, $\theta(\Xi)$ is log-uniform on $[1,1+\alpha]$, with density
\(
\frac{1}{\theta\log(1+\alpha)},\theta\in[1,1+\alpha].
\)
Therefore, the randomized analysis averages the same phase-position certificate instead of taking its minimum:
\(
\mathbb E_\Xi[\Rev_\Xi(\ell)]\ge c_{\mathrm{rand}}(\alpha)U(\ell), c_{\mathrm{rand}}(\alpha)=\frac{1}{\log(1+\alpha)}\int_1^{1+\alpha}H_\alpha(\theta)\frac{d\theta}{\theta}.
\)
Numerically optimizing over $\alpha>1$ gives $\sup_{\alpha>1}c_{\mathrm{rand}}(\alpha)=0.3744$ at $\alpha\approx 4.1153$. Taking expectation over both the random shift and the arrival process gives the scale-normalized guarantee $\mathbb E_{\Xi,L}[\Rev_\Xi(L)]\ge 0.3744\,\mathbb E[U(L)]$.

One subtle issue is that the ideal shifted milestones $(1+\alpha)^{j+\Xi}$ are generally noninteger. In a finite implementation, both the milestone and intermediate target must therefore be rounded. For this reason, the value $0.3744$ should be interpreted as a scale-normalized randomized certificate, rather than as a fully finite integer-grid guarantee. The rounding error becomes negligible at large served count, so the calculation is most meaningful in large-capacity regime where discretization effects are small. For finite instances, our main guarantee remains the deterministic $0.2466$ bound in Theorem~\ref{thm:det-repair}. The randomized analysis instead shows that a substantial part of the deterministic loss comes from worst-case alignment between the realized served count and the fixed milestone grid.

\section{Conclusion}\label{sec:open}

This paper initiates the study of online resource allocation with increasing state-dependent rewards. Our results reveal a sharp algorithmic landscape. Under homogeneous bonus functions, a tight $1/2$ competitive ratio is achievable. In contrast, arbitrary heterogeneous bonuses fundamentally change the problem: no constant competitive guarantee is possible as the number of products grows. On the positive side, we identify a broad structured class, nondecreasing discrete-concave bonus functions, under which constant-factor guarantees can be recovered through the intermediate target repair framework. Together, these results demonstrate that the structure of the reward functions is the key determinant of algorithmic tractability.

Several important questions remain open. The most immediate one is to improve the competitive guarantee under discrete-concave bonus functions. Our intermediate target repair algorithm achieves a deterministic $0.2466$ competitive ratio, but we do not know whether this constant is optimal. The tight $1/2$ lower bound established for homogeneous bonuses relies on jump-type bonus functions that violate discrete concavity, and therefore does not apply to this structured class. Establishing stronger algorithms, matching lower bounds, or even the optimal competitive ratio under discrete-concave bonuses remains an interesting direction.

Another natural direction is to identify broader classes of increasing-return reward functions that still admit constant competitive algorithms. Discrete concavity rules out highly backloaded rewards while preserving increasing marginal returns, but it is unlikely to be the largest tractable class. Understanding the boundary between tractable and intractable reward structures could lead to a more complete characterization of online allocation with increasing returns.

Finally, our model assumes full compatibility between customers and products in order to isolate the effect of increasing returns. Extending the framework to incorporate compatibility constraints, heterogeneous customer types, or customer-specific reward functions would substantially broaden its applicability. Such extensions would combine the uncertainty of online matching with the dynamic reward interactions introduced by increasing returns, presenting new algorithmic and analytical challenges.

\newpage

\bibliographystyle{splncs04}
\bibliography{references}  

@inproceedings{KVV1990,
  author    = {Richard M. Karp and Umesh V. Vazirani and Vijay V. Vazirani},
  title     = {An Optimal Algorithm for On-line Bipartite Matching},
  booktitle = {Proceedings of the 22nd Annual ACM Symposium on Theory of Computing (STOC)},
  pages     = {352--358},
  year      = {1990},
  publisher = {ACM},
  doi       = {10.1145/100216.100262}
}

@article{MSVV2007,
  author  = {Aranyak Mehta and Amin Saberi and Umesh V. Vazirani and Vijay V. Vazirani},
  title   = {{AdWords} and Generalized Online Matching},
  journal = {Journal of the ACM},
  volume  = {54},
  number  = {5},
  pages   = {22:1--22:19},
  year    = {2007},
  doi     = {10.1145/1284320.1284321}
}

@article{Mehta2013,
  author    = {Aranyak Mehta},
  title     = {Online Matching and Ad Allocation},
  journal   = {Foundations and Trends in Theoretical Computer Science},
  volume    = {8},
  number    = {4},
  pages     = {265--368},
  year      = {2013},
  publisher = {Now Publishers},
  doi       = {10.1561/0400000057}
}

@inproceedings{FMMM2009,
  author    = {Jon Feldman and Aranyak Mehta and Vahab S. Mirrokni and S. Muthukrishnan},
  title     = {Online Stochastic Matching: Beating $1-1/e$},
  booktitle = {Proceedings of the 50th Annual IEEE Symposium on Foundations of Computer Science (FOCS)},
  pages     = {117--126},
  year      = {2009},
  publisher = {IEEE Computer Society},
  doi       = {10.1109/FOCS.2009.72}
}

@article{manshadi2012online,
  author    = {Vahideh H. Manshadi and Shayan {Oveis Gharan} and Amin Saberi},
  title     = {Online Stochastic Matching: Online Actions Based on Offline Statistics},
  journal   = {Mathematics of Operations Research},
  volume    = {37},
  number    = {4},
  pages     = {559--573},
  year      = {2012},
  publisher = {INFORMS},
  doi       = {10.1287/moor.1120.0551}
}

@inproceedings{DevanurHayes2009,
  author    = {Nikhil R. Devanur and Thomas P. Hayes},
  title     = {The {AdWords} Problem: Online Keyword Matching with Budgeted Bidders under Random Permutations},
  booktitle = {Proceedings of the 10th ACM Conference on Electronic Commerce (EC)},
  pages     = {71--78},
  year      = {2009},
  publisher = {ACM},
  doi       = {10.1145/1566374.1566384}
}

@article{katz1985network,
  author    = {Michael L. Katz and Carl Shapiro},
  title     = {Network Externalities, Competition, and Compatibility},
  journal   = {The American Economic Review},
  volume    = {75},
  number    = {3},
  pages     = {424--440},
  year      = {1985},
  publisher = {American Economic Association}
}

@article{arrow1962economic,
  author  = {Kenneth J. Arrow},
  title   = {The Economic Implications of Learning by Doing},
  journal = {The Review of Economic Studies},
  volume  = {29},
  number  = {3},
  pages   = {155--173},
  year    = {1962},
  doi     = {10.2307/2295952}
}

@book{arthur1994increasing,
  author    = {W. Brian Arthur},
  title     = {Increasing Returns and Path Dependence in the Economy},
  publisher = {University of Michigan Press},
  address   = {Ann Arbor, MI},
  year      = {1994}
}

@misc{reuters2025comiccon,
  author       = {Danielle Broadway},
  title        = {Labubu Fans Dote over Ugly-Cute Doll Trending at {Comic-Con}},
  howpublished = {Reuters, \url{https://www.reuters.com/lifestyle/labubu-fans-dote-over-ugly-cute-doll-trending-comic-con-2025-07-27/}},
  year         = {2025},
  month        = jul,
  note         = {Accessed: 2026-06-30}
}

@inproceedings{kapralov2013online,
  title={Online submodular welfare maximization: Greedy is optimal},
  author={Kapralov, Michael and Post, Ian and Vondr{\'a}k, Jan},
  booktitle={Proceedings of the twenty-fourth annual ACM-SIAM symposium on Discrete algorithms},
  pages={1216--1225},
  year={2013},
  organization={SIAM}
}

@article{Santiago2023,
author = {Balseiro, Santiago and Kroer, Christian and Kumar, Rachitesh},
title = {Online Resource Allocation under Horizon Uncertainty},
year = {2023},
issue_date = {June 2023},
publisher = {Association for Computing Machinery},
address = {New York, NY, USA},
volume = {51},
number = {1},
issn = {0163-5999},
url = {https://doi.org/10.1145/3606376.3593559},
doi = {10.1145/3606376.3593559},
journal = {SIGMETRICS Perform. Eval. Rev.},
month = jun,
pages = {63–64},
numpages = {2}
}

@inproceedings{charikar1997incremental,
  title={Incremental clustering and dynamic information retrieval},
  author={Charikar, Moses and Chekuri, Chandra and Feder, Tom{\'a}s and Motwani, Rajeev},
  booktitle={Proceedings of the twenty-ninth annual ACM symposium on Theory of computing},
  pages={626--635},
  year={1997}
}

@article{Shalev2012,
author = {Shalev-Shwartz, Shai},
title = {Online Learning and Online Convex Optimization},
year = {2012},
issue_date = {February 2012},
publisher = {Now Publishers Inc.},
address = {Hanover, MA, USA},
volume = {4},
number = {2},
issn = {1935-8237},
url = {https://doi.org/10.1561/2200000018},
doi = {10.1561/2200000018},
journal = {Found. Trends Mach. Learn.},
month = feb,
pages = {107–194},
numpages = {88}
}

@article{elad2016,
author = {Hazan, Elad},
title = {Introduction to Online Convex Optimization},
year = {2016},
issue_date = {Aug 2016},
publisher = {Now Publishers Inc.},
address = {Hanover, MA, USA},
volume = {2},
number = {3–4},
issn = {2167-3888},
url = {https://doi.org/10.1561/2400000013},
doi = {10.1561/2400000013},
journal = {Found. Trends Optim.},
month = aug,
pages = {157–325},
numpages = {176}
}

@inproceedings{devanur2012online,
  title={Online matching with concave returns},
  author={Devanur, Nikhil R and Jain, Kamal},
  booktitle={Proceedings of the forty-fourth annual ACM symposium on Theory of computing},
  pages={137--144},
  year={2012}
}

@article{JasinKumar2012,
  author  = {Jasin, Stefanus and Kumar, Sunil},
  title   = {A Re-Solving Heuristic with Bounded Revenue Loss for Network Revenue Management with Customer Choice},
  journal = {Mathematics of Operations Research},
  volume  = {37},
  number  = {2},
  pages   = {313--345},
  year    = {2012}
}

@article{BumpensantiWang2020,
  author  = {Bumpensanti, Pornpawee and Wang, He},
  title   = {A Re-Solving Heuristic with Uniformly Bounded Loss for Network Revenue Management},
  journal = {Management Science},
  volume  = {66},
  number  = {7},
  pages   = {2993--3009},
  year    = {2020}
}

@inproceedings{goel2008online,
  title={Online budgeted matching in random input models with applications to Adwords.},
  author={Goel, Gagan and Mehta, Aranyak},
  booktitle={SODA},
  volume={8},
  pages={982--991},
  year={2008}
}

@inproceedings{AggarwalGoelKarandeMehta2011,
  author    = {Gagan Aggarwal and
               Gagan Goel and
               Chinmay Karande and
               Aranyak Mehta},
  title     = {Online Vertex-Weighted Bipartite Matching and Single-Bid Budgeted Allocations},
  booktitle = {Proceedings of the Twenty-Second Annual ACM-SIAM Symposium on Discrete Algorithms (SODA)},
  pages     = {1253--1264},
  year      = {2011},
  publisher = {SIAM}
}

@article{huang2024online,
  title={Online matching: A brief survey},
  author={Huang, Zhiyi and Tang, Zhihao Gavin and Wajc, David},
  journal={ACM SIGecom Exchanges},
  volume={22},
  number={1},
  pages={135--158},
  year={2024},
  publisher={ACM New York, NY, USA}
}

@misc{modernretail2025labubunomics,
  author       = {Julia Waldow and Allison Smith},
  title        = {The Rise of {Labubunomics}: How Merchants and Marketplaces Are Cashing in on the Viral {Pop Mart} Toy},
  howpublished = {Modern Retail, \url{https://www.modernretail.co/marketing/the-rise-of-labubunomics-how-merchants-and-marketplaces-are-cashing-in-on-the-viral-pop-mart-toy/}},
  year         = {2025},
  month        = jun,
  note         = {Accessed: 2026-06-30}
}

@article{economides1996network,
  title={Network externalities, complementarities, and invitations to enter},
  author={Economides, Nicholas},
  journal={European Journal of Political Economy},
  volume={12},
  number={2},
  pages={211--233},
  year={1996},
  publisher={Elsevier}
}

@article{wright1936factors,
  title={Factors affecting the cost of airplanes},
  author={Wright, Theodore P},
  journal={Journal of the aeronautical sciences},
  volume={3},
  number={4},
  pages={122--128},
  year={1936}
}

@article{liu2026broadcast,
  title={When to broadcast? Inventory disclosure policies for online sales of limited inventory},
  author={Liu, Zibo and Chen, Shi and Moinzadeh, Kamran and Tan, Yong},
  journal={Information Systems Research},
  volume={37},
  number={1},
  pages={117--137},
  year={2026},
  publisher={INFORMS}
}

@inproceedings{blum2015online,
  title={Online allocation and pricing with economies of scale},
  author={Blum, Avrim and Mansour, Yishay and Yang, Liu},
  booktitle={International Conference on Web and Internet Economics},
  pages={159--172},
  year={2015},
  organization={Springer}
}

\newpage

\begin{appendix}

\section{Omitted Proofs in Section~\ref{sec:limit}}
\label{app:proof-limit}

\begin{proof}[Proof of Proposition \ref{prop:greedy-fail}]
We prove the lower bound by constructing a family of homogeneous-bonus instances parameterized by $M>0$. 

Fix any $p\in(0,1)$.
Consider an instance with two periods, where $p_1=1, p_2=p$. Thus one customer always arrives in period $1$, and a second customer arrives in period $2$ with probability $p$. There are two products. Product 1 has capacity $b_1=1$ and base reward $r_1=1$. Product 2 has capacity $b_2=2$ and base reward $r_2=0$. The sales-driven bonus is homogeneous across products: $f_1=f_2=f$, where
\[
f(1)=0,\qquad f(k)=M \ \text{for all } k\ge 2.
\]
Note that $f$ is nonnegative and nondecreasing.

Under the myopic greedy algorithm, the first customer is assigned to product 1, since
\[
w_1(1)=r_1+f_1(1)=1>0=r_2+f_2(1)=w_2(1).
\]
Hence, greedy earns reward $1$ in period $1$.
If the second customer arrives, product 1 is already full, so the second customer can only be assigned to product 2, yielding reward $w_2(1)=0$.
Therefore, $\mathbb E[\Rev^\mathrm{Greedy}] = 1$.

Now consider the online algorithm that assigns the first customer to product 2, and assigns the second customer to product 2 as well if it arrives. This policy earns reward $0$ from the first sale and reward $M$ from the second sale if it occurs. 
Its expected reward is therefore $pM$, so $\mathrm{OPT}_{\mathrm{on}} \ge pM$.
Thus,
\[
\frac{\mathbb E[\Rev^\mathrm{Greedy}]}{\mathrm{OPT}_{\mathrm{on}}}
\le \frac{1}{pM}.
\]
For any prescribed constant $c>0$, choosing parameter $M>1/(pc)$ gives $\mathbb E[\Rev^\mathrm{Greedy}]<c\,\OPT_{\mathrm{on}}$. 
\Halmos
\end{proof}

\begin{proposition}[LP upper bound]
\label{prop:lp-upper-bound}
    The LP \eqref{eq:lp} upper bounds the optimal online value: $V^{\mathrm{LP}}\ge \OPT_{\mathrm{on}}$.
\end{proposition}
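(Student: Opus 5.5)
We plan the standard ex-ante relaxation argument. Fix any online algorithm $\pi$, possibly randomized. For each product $i$ and sales level $k\in\{1,\dots,b_i\}$, define
\[
y^\pi_{i,k}:=\P\bigl(x^\pi_i\ge k\bigr),
\]
where $x^\pi_i$ is the final number of customers allocated to product $i$. The probability is taken over the arrival process and the internal randomness of $\pi$. We will show that $\boldsymbol y^\pi$ is feasible for \eqref{eq:lp} and that its objective value equals $\E[\Rev^\pi]$. Then $\E[\Rev^\pi]\le V^{\mathrm{LP}}$ for every $\pi$, and taking the supremum over $\pi$ gives $\OPT_{\mathrm{on}}\le V^{\mathrm{LP}}$. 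If the supremum is not attained, the bound still holds because it holds for each $\pi$.

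First, the objective. By \eqref{eq:cum-reward}, the realized reward of $\pi$ is
\[
\sum_i R_i(x^\pi_i)=\sum_i\sum_{k=1}^{b_i}(r_i+f_i(k))\,\mathbbm 1\{x^\pi_i\ge k\}.
\]
Taking expectations and using linearity gives exactly $\sum_{i,k}(r_i+f_i(k))y^\pi_{i,k}$. Note that the order of the sales does not matter: the $k$-th sale of product $i$ always earns $r_i+f_i(k)$.

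Next, feasibility, handled constraint by constraint.
\begin{itemize}
\item The bounds \eqref{eq:lp-bounds-inc} hold because each $y^\pi_{i,k}$ is a probability.
\item The ordering constraint \eqref{eq:lp-order-inc} holds because $\{x^\pi_i\ge k\}\subseteq\{x^\pi_i\ge k-1\}$.
\item For the expected-capacity constraint \eqref{eq:lp-cap-inc}, we use $\sum_{k=1}^{b_i}\mathbbm 1\{x^\pi_i\ge k\}=x^\pi_i$, since $x^\pi_i\le b_i$. Hence $\sum_{i,k}y^\pi_{i,k}=\E\bigl[\sum_i x^\pi_i\bigr]$. Every assigned customer is an arrival, so $\sum_i x^\pi_i\le L$ pathwise. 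Therefore $\sum_{i,k}y^\pi_{i,k}\le\E[L]=\mu_L$.
\end{itemize}

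There is no real obstacle here. The only points needing care are that randomization is allowed, which is why $y$ is defined as a probability over all sources of randomness, and that the pathwise bound $\|\boldsymbol x^\pi\|_1\le L$ holds for every realization. The same argument also gives the pathwise inequality for general random-horizon models, with $\mu_L$ replaced by $\E[L]$.
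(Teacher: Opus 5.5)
Your proposal is correct and follows essentially the same route as the paper's proof. Both define $y^\pi_{i,k}=\P(x^\pi_i\ge k)$, check the three constraints (probability bounds, nested events, and $\sum_{k}\P(x^\pi_i\ge k)=\E[x^\pi_i]$ together with $\sum_i x^\pi_i\le L$), identify the LP objective with $\E[\Rev^\pi]$ via the decomposition \eqref{eq:cum-reward}, and take the supremum over $\pi$.
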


\begin{proof}[Proof of Proposition \ref{prop:lp-upper-bound}]
    We prove the claim by showing that every admissible online algorithm induces a feasible LP solution whose objective value equals the policy's expected reward. 
    
    Fix any admissible online algorithm $\pi$, and let $X_i^\pi$ denote the final number of customers allocated to product $i$. For each $i\in\mathcal I$ and $k=1,\dots, b_i$, define $y_{i,k}^{\pi}=\mathbb P(X_i^\pi\ge k)$. 
    
    We first verify that $\boldsymbol{y}^{\pi}$ is feasible for \eqref{eq:lp}. Since $y_{i,k}^{\pi}$ is a probability, $0\le y_{i,k}^{\pi}\le 1$, so \eqref{eq:lp-bounds-inc} holds. Moreover, the event $\{X_i^\pi\ge k\}$ is contained in the event $\{X_i^\pi\ge k-1\}$ for every $k=2,\dots,b_i$, so $y_{i,k}^{\pi}\le y_{i,k-1}^{\pi}$ and \eqref{eq:lp-order-inc} holds. 
    In addition, we have
    \[
    \sum_{i\in\mathcal I}\sum_{k=1}^{b_i} y_{i,k}^\pi
    =\sum_{i\in\mathcal I}\sum_{k=1}^{b_i}\mathbb P(X_i^\pi\ge k)
    =\sum_{i\in\mathcal I}\mathbb E[X_i^\pi]\le \mathbb E[L]=\sum_{t=1}^T p_t,
    \]
    where the second equation follows since $X_i^\pi$ is integer-valued and bounded by $b_i$, and the inequality follows because each arrival can be allocated to at most one product. Thus, \eqref{eq:lp-cap-inc} holds.
    
    Finally, by the cumulative reward decomposition \eqref{eq:cum-reward}, we have the expected revenue of policy $\pi$ as
    \[
    \mathbb E[\Rev^\pi]
    =\mathbb E\left[\sum_{i\in\mathcal I}R_i(X_i^\pi)\right]
    =\sum_{i\in\mathcal I}\sum_{k=1}^{b_i} \bigl(r_i+f_i(k)\bigr)\mathbb P(X_i^\pi\ge k)
    =\sum_{i\in\mathcal I}\sum_{k=1}^{b_i} \bigl(r_i+f_i(k)\bigr)y_{i,k}^\pi.
    \]
    Thus, the expected reward of policy $\pi$ is the LP objective evaluated at the feasible solution $\boldsymbol{y}^{\pi}$. Since this holds for every admissible online algorithm, taking the supremum over $\pi$ gives $\OPT_{\mathrm{on}}\le V^{\mathrm{LP}}$. \Halmos
\end{proof}

\begin{proof}[Proof of Proposition \ref{prop:lp-fail}]
We prove the lower bound by constructing a family of homogeneous-bonus instances parameterized by $p\in(0,1)$.

Fix any $M>2$ and we choose $p<1/M$. Consider an instance with two periods, where $p_1=1, p_2=p$. Thus one customer always arrives in period $1$, and a second customer arrives in period $2$ with probability $p$. There are two products. Product 1 has capacity $b_1=1$ and base reward $r_1=1$. Product 2 has capacity $b_2=2$ and base reward $r_2=0$. The sales-driven bonus is homogeneous across products: $f_1=f_2=f$, where
\[
f(1)=0,\qquad f(k)=M \ \text{for all } k\ge 2.
\]
Note that $f$ is nonnegative and nondecreasing.

For this family of instance, the LP becomes
\[
\max_{\boldsymbol{y}}\ y_{1,1}+M y_{2,2},
\]
subject to
\[
y_{1,1}+y_{2,1}+y_{2,2}\le 1+p,\qquad y_{2,2}\le y_{2,1},\qquad 0\le y_{1,1},y_{2,1},y_{2,2}\le 1.
\]
Since $y_{2,1}$ has zero objective coefficient and $y_{2,2}\le y_{2,1}$, any optimal solution sets $y_{2,1}=y_{2,2}$. The LP therefore reduces to
\[
\max\ y_{1,1}+M y_{2,2}
\]
subject to
\[
y_{1,1}+2y_{2,2}\le 1+p,\qquad 0\le y_{1,1}\le 1,\qquad 0\le y_{2,2}\le 1.
\]
Because $M>2$, increasing $y_{2,2}$ yields more objective value per unit of capacity than increasing $y_{1,1}$. Hence an optimal solution is
\[
y_{1,1}^*=0, \qquad y_{2,1}^*=y_{2,2}^*=\frac{1+p}{2}.
\]
The induced rounding probabilities are therefore
\[
q_1=0 \qquad \text{and} \qquad q_2=\frac{y_{2,1}^*+y_{2,2}^*}{1+p}=1.
\]
Thus, the LP-based independent-rounding algorithm sends every arrival to product 2.
Under this policy, the first customer yields reward $r_2+f_2(1)=0$. If the second customer arrives, the second sale of product 2 yields reward $r_2+f_2(1)=M$. Hence,
$\mathbb E[\Rev^\mathrm{LP}]=pM$.

Now consider the online algorithm that assigns the first customer to product 1 earns reward $1$ for sure and $\OPT_{\mathrm{on}}\ge 1$. Thus,
\[
\frac{\mathbb E[\Rev^\mathrm{LP}]}{\OPT_{\mathrm{on}}}\le pM.
\]
For any prescribed constant $c>0$, choosing $p<c/M$ gives $\mathbb E[\Rev^\mathrm{LP}]<c\,\OPT_{\mathrm{on}}$. 
\Halmos
\end{proof}

\begin{lemma}[Polynomial-time evaluation of a fixed-order fill policy]
\label{lemma:fixed-order-eval}
For any fixed ordering $\sigma$, the expected reward of the fixed-order fill policy $\pi^\sigma$ can be computed in $O(TC)$ time.
\end{lemma}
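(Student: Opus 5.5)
The plan is to observe that under a fixed-order fill policy $\pi^\sigma$ the final allocation is a deterministic function of the realized arrival count $L$. We then compute the distribution of $L$ by a standard dynamic program and sum the corresponding rewards.

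\textbf{Step 1 (deterministic allocation).} Since $\pi^\sigma$ ignores timing and assigns each arrival to the first product in $\sigma$ with remaining capacity, after $n$ arrivals the allocation is the greedy prefix fill. Products $\sigma_1,\dots,\sigma_{j-1}$ are full, product $\sigma_j$ holds $n-\sum_{r<j}b_{\sigma_r}$ units, and all others are empty. For $n\ge C$ every product is full. Hence $\Rev^{\pi^\sigma}=G(\min\{L,C\})$ for a deterministic function $G$.

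\textbf{Step 2 (computing $G$ in $O(C)$ time).} Walk through the $C$ unit slots in the order $\sigma_1$'s units $1,\dots,b_{\sigma_1}$, then $\sigma_2$'s units, and so on. Set $G(0)=0$ and $G(n)=G(n-1)+r_i+f_i(k)$, where the $n$-th slot is the $k$-th unit of product $i$. This takes $O(C)$ time, assuming constant-time evaluation of $f_i$. It also uses the cumulative decomposition \eqref{eq:cum-reward}.

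\textbf{Step 3 (the truncated distribution of $L$ in $O(TC)$ time).} Let $P_t(n)=\P(\min\{L_t,C\}=n)$ for $n=0,\dots,C$, where $L_t$ counts arrivals in the first $t$ periods. Start from $P_0(0)=1$ and apply the recursion
\[
P_t(n)=(1-p_t)P_{t-1}(n)+p_tP_{t-1}(n-1)\quad (1\le n<C),
\]
with the obvious modifications at $n=0$ and at the absorbing state $n=C$, where $P_t(C)=P_{t-1}(C)+p_tP_{t-1}(C-1)$. Each period costs $O(C)$, so the total is $O(TC)$.

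\textbf{Step 4 (assembly).} Finally, $\E[\Rev^{\pi^\sigma}]=\sum_{n=0}^{C}P_T(n)G(n)$, which costs $O(C)$. The overall bound is $O(TC)$, plus $O(m\log m)$ if $\sigma$ must first be sorted, which is not counted since $\sigma$ is given.

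The only mild subtlety is the truncation at $C$. Without capping the state, the DP would need $O(T^2)$ states. Capping is valid because $G$ is constant for $n\ge C$, since all further arrivals are rejected. No step is a real obstacle.
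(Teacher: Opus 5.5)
Your proposal is correct and follows essentially the paper's route: the allocation under $\pi^\sigma$ is a deterministic function of the realized arrival count, and the Poisson-binomial distribution of $L$ is computed by the same $O(TC)$ recursion truncated at $C$. The only difference is cosmetic. You pair the pmf of $\min\{L,C\}$, with an absorbing state at $C$, against the cumulative reward $G(n)$. The paper instead pairs each per-unit reward against the tail probability $\P(L\ge \ell)$, obtained as the complement of the pmf on $\{0,\dots,C-1\}$. These two forms are a summation-by-parts rearrangement of the same computation.
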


\begin{proof}[Proof of Lemma \ref{lemma:fixed-order-eval}]
Fix an ordering $\sigma=(\sigma_1,\dots,\sigma_m)$. Under $\pi^\sigma$, product $\sigma_i$ is filled after products $\sigma_1,\dots,\sigma_{i-1}$ are full. Thus the $k$-th sale of product $\sigma_i$ occurs if and only if $L\ge \sum_{j=1}^{i-1} b_{\sigma_j}+k$.
Therefore, we can write the revenue as
\[
    \Rev^{\pi^\sigma}
    =
    \sum_{i=1}^m\sum_{k=1}^{b_{\sigma_i}}
    \left(r_{\sigma_i}+f(k)\right)
    \mathbbm 1\left\{L\ge \sum_{j=1}^{i-1} b_{\sigma_j}+k\right\}.
\]
Taking expectations gives
\[
    \mathbb E[\Rev^{\pi^\sigma}]
    =
    \sum_{i=1}^m\sum_{k=1}^{b_{\sigma_i}}
    \left(r_{\sigma_i}+f(k)\right)
    \P\left(L\ge \sum_{j=1}^{i-1} b_{\sigma_j}+k\right).
\]

It remains to compute the tail probabilities of $L$. Let $A_t$ is the indicator of whether the $t$-th customer arrives. Since $L=\sum_{t=1}^T A_t$ is a Poisson-binomial random variable, define
\[
    P_t(k):=\P\left(\sum_{\tau=1}^t A_\tau=k\right),
    \qquad t=0,1,\dots,T.
\]
Then $P_0(0)=1$ and $P_0(k)=0$ for $k\ge 1$. For $t=1,\dots,T$, the probabilities satisfy the recursion
\[
    P_t(k)
    =
    (1-p_t)P_{t-1}(k)+p_t P_{t-1}(k-1),
\]
where we set $P_{t-1}(-1)=0$. Computing this recursion for $k=0,\dots,C-1$ gives $\P(L=k)$ for all $k<C$. Thus, for each $\ell=1,\dots,C$,
\[
    \P(L\ge \ell)
    =
    1-\sum_{k=0}^{\ell-1}P_T(k).
\]

The recursion has $T$ stages and keeps at most $C$ probability values at each stage. Hence the distribution of $L$ up to capacity $C$ can be computed in $O(TC)$ time. The tail probabilities $\P(L\ge \ell)$ for $\ell=1,\dots,C$ can then be obtained by one additional linear pass, and the reward summation also has exactly $C$ terms. Therefore, $\mathbb E[\Rev^{\pi^\sigma}]$ is computable in $O(TC)$ time. This proves the lemma. \Halmos
\end{proof}

\begin{proof}[Proof of Theorem \ref{thm:homo-half-approx}]
We first prove the $1/2$ guarantee of comparing two fixed-order fill algorithm and then show that the better algorithm can be selected in $O(TC)$ time. 

In order to prove the $1/2$ guarantee, it suffices to show that 
\[
\mathbb E[U(L)]\le\mathbb E\left[\Rev^{\pi^{\mathrm{base}}}\right]+\mathbb E\left[\Rev^{\pi^{\mathrm{cap}}}\right].
\]
Indeed, together with the offline upper bound $\OPT_{\mathrm{on}}\le \mathbb E[U(L)]$, this inequality implies
\[
    \max\left\{\mathbb E\left[\Rev^{\pi^{\mathrm{base}}}\right],\mathbb E\left[\Rev^{\pi^{\mathrm{cap}}}\right]\right\}
    \ge \frac{1}{2}\left(\mathbb E\left[\Rev^{\pi^{\mathrm{base}}}\right]+\mathbb E\left[\Rev^{\pi^{\mathrm{cap}}}\right]\right)
    \ge
    \frac12\,\mathbb E[U(L)]
    \ge
    \frac12\,\OPT_{\mathrm{on}}.
\]
We will prove the desired inequality pathwise. Fix any realized number of served customers $\ell\in\{0,1,\dots,C\}$, we proceed the proof in three steps. First, we upper bound the offline value $U(\ell)$ by the sum of two separately optimized benchmarks: one for the base reward and one for the homogeneous bonus. Second, we show that these two benchmarks are attained by the allocation induced by the base-ordered and capacity-ordered fill algorithms, respectively. Third, we gives a pathwise upper bound on $U(\ell)$ by the sum of the two fixed-order algorithm rewards; substituting $\ell=L$ and taking expectations then yields the desired comparison.

\textbf{(a) A separated upper bound on the offline value.}
For $n\in\mathbb Z_+$, define the cumulative bonus as
\[
    F(n):=\sum_{k=1}^n f(k),\qquad F(0):=0.
\]
Then the reward from assigning $n$ customers to product $i$ is $R_i(n)=n r_i+F(n)$. Since $f$ is nondecreasing, we can also get $F$ is discrete convex: $F(n+1)-F(n)=f(n+1)$ is nondecreasing in $n$.

Fix demand $\ell\in\{0,1,\dots,C\}$, and let
\[
\mathcal X(\ell):=
\left\{\boldsymbol{x}=(x_1,\dots,x_m)\in\mathbb Z_+^m:0\le x_i\le b_i,\ \sum_{i=1}^m x_i=\ell\right\}
\]
be the set of feasible allocation when $\ell$ customers are served. The offline value is
\[
    U(\ell)
    =
    \max_{\boldsymbol{x}\in\mathcal X(\ell)}
    \left\{
    \sum_{i=1}^m x_i r_i+\sum_{i=1}^m F(x_i)
    \right\}.
\]
We upper bound this value by optimizing the base-reward and bonus terms separately. Define
\[
    V^\mathrm{Base}(\ell):=
    \max_{\boldsymbol{x}\in\mathcal X(\ell)} \sum_{i=1}^m x_i r_i,
    \qquad
    V^\mathrm{Bonus}(\ell):=
    \max_{\boldsymbol{x}\in\mathcal X(\ell)} \sum_{i=1}^m F(x_i).
\]
Here $V^\mathrm{Base}(\ell)$ is the largest possible base reward, while $V^\mathrm{Bonus}(\ell)$ is the largest possible bonus. Since $U(\ell)$ uses a single allocation to optimize the sum of the two terms, whereas $V^\mathrm{Base}(\ell)$ and $V^\mathrm{Bonus}(\ell)$ optimize the two terms separately, we have
\begin{equation}
\label{eq:pathwise-split}
    U(\ell)
    \le
    V^\mathrm{Base}(\ell)+V^\mathrm{Bonus}(\ell).
\end{equation}

\textbf{(b) Optimizers of the two separated benchmarks.}
We next identify the optimizers of $V^{\mathrm{Base}}(\ell)$ and $V^{\mathrm{Bonus}}(\ell)$ and show that they correspond to the allocation induced by the base-ordered and capacity-ordered algorithms, respectively.

\textit{(i) Base-reward benchmark.} Let $\sigma^r=(\sigma^r_1,\dots,\sigma^r_m)$ be an ordering of the products such that $r_{\sigma^r_1}\ge r_{\sigma^r_2}\ge\cdots\ge r_{\sigma^r_m}$. Let $\boldsymbol{x}^{\mathrm{base}}(\ell)$ be the allocation obtained by serving $\ell$ customers in this order:
\[
    x_{\sigma^r_j}^{\mathrm{base}}(\ell)
    :=
    \min\left\{b_{\sigma^r_j},\,\Bigl(\ell-\sum_{k=1}^{j-1} b_{\sigma^r_k}\Bigr)^+\right\},
    \qquad j=1,\dots,m.
\]

We claim that $\boldsymbol{x}^{\mathrm{base}}(\ell)$ attains $V^{\mathrm{Base}}(\ell)$, so $V^{\mathrm{Base}}(\ell)=\sum_{i=1}^m x_i^{\mathrm{base}}(\ell) r_i$.
To prove the claim, take any $\boldsymbol{x}\in\mathcal X(\ell)$. If there exist positions $j<k$ such that $x_{\sigma^r_j}<b_{\sigma^r_k}$ and $x_{\sigma^r_k}>0$, then the vector $   \boldsymbol{x}'=\boldsymbol{x}+\boldsymbol{e}_{\sigma^r_j}-\boldsymbol{e}_{\sigma^r_k}\in\mathcal X(\ell)$ and satisfies
\[
    \sum_{i=1}^m x'_i r_i-\sum_{i=1}^m x_i r_i
    =
    r_{\sigma^r_j}-r_{\sigma^r_k}\ge 0.
\]
Thus shifting one unit from a product with lower base reward to a product with higher base reward never decreases the objective. Repeating this exchange until no such pair remains yields $\boldsymbol{x}^{\mathrm{base}}(\ell)$, proving the claim.

\textit{(ii) Bonus benchmark.} Let $\sigma^b=(\sigma^b_1,\dots,\sigma^b_m)$ be an ordering of the products such that $b_{\sigma^b_1}\ge b_{\sigma^b_2}\ge\cdots\ge b_{\sigma^b_m}$.
Let $\boldsymbol{x}^{\mathrm{cap}}(\ell)$ be the allocation obtained by serving $\ell$ customers in this order:
\[
    x_{\sigma^b_j}^{\mathrm{cap}}(\ell)
    :=
    \min\left\{b_{\sigma^b_j},\,\Bigl(\ell-\sum_{k=1}^{j-1} b_{\sigma^b_k}\Bigr)^+\right\},
    \qquad j=1,\dots,m.
\]

We claim that $\boldsymbol{x}^{\mathrm{cap}}(\ell)$ attains $V^{\mathrm{Bonus}}(\ell)$, so $V^{\mathrm{Bonus}}(\ell)=\sum_{i=1}^m F(x_i^{\mathrm{cap}}(\ell))$. To prove the claim, we use a majorization argument. A vector $\boldsymbol{u}$ majorizes a vector $\boldsymbol{v}$ if, after sorting their components in nonincreasing order as $u_{[1]}\ge\cdots\ge u_{[m]}$ and $v_{[1]}\ge\cdots\ge v_{[m]}$, we have
\[
    \sum_{i=1}^s u_{[i]}\ge \sum_{i=1}^s v_{[i]},
    \qquad \forall s=1,\dots,m-1,
\]
and equality holds for $s=m$.

Take any $\boldsymbol{x}\in\mathcal X(\ell)$, and let $x_{[1]}\ge x_{[2]}\ge\cdots\ge x_{[m]}$ be its components sorted in nonincreasing order. For every $s=1,\dots,m$, the $s$ largest entries of $\boldsymbol{x}$ correspond to some set of $s$ products. Their total allocation is at most the total capacity of those products, and hence at most the total capacity of the $s$ largest-capacity products. Since the total allocation is $\ell$, we have
\[
    \sum_{i=1}^s x_{[i]}
    \le
    \min\left\{\ell,\sum_{j=1}^s b_{\sigma^b_j}\right\}
    =
    \sum_{j=1}^s x_{\sigma^b_j}^{\mathrm{cap}}(\ell),
\]
where the equality follows from the definition of $\boldsymbol{x}^{\mathrm{cap}}(\ell)$. The sequence $   x_{\sigma^b_1}^{\mathrm{cap}}(\ell),\dots,x_{\sigma^b_m}^{\mathrm{cap}}(\ell)$ is nonincreasing, and both $\boldsymbol{x}$ and $\boldsymbol{x}^{\mathrm{cap}}(\ell)$ have total mass $\ell$. Therefore, $\boldsymbol{x}^{\mathrm{cap}}(\ell)$ majorizes $\boldsymbol{x}$.

Because $F$ is discrete convex, the piecewise-linear interpolation of $F$ is convex on $[0,C]$. By Karamata's inequality, if $\boldsymbol{u}$ majorizes $\boldsymbol{v}$, then the sum of any convex function over $\boldsymbol{u}$ is weakly larger than that over $\boldsymbol{v}$. Applying Karamata's inequality to this convex extension, and using that all coordinates are integers, gives
\[
    \sum_{i=1}^m F(x_i)
    =
    \sum_{i=1}^m F(x_{[i]})
    \le
    \sum_{i=1}^m F(x_i^{\mathrm{cap}}(\ell)).
\]
Thus no feasible allocation yields a larger bonus value than $\boldsymbol{x}^{\mathrm{cap}}(\ell)$, proving the claim.

\textbf{(c) Comparison with the two fixed-order fill algorithms.}
We now translate the two separated benchmarks into rewards of the two algorithms. If exactly $\ell$ customers are served under $\pi^{\rm base}$, its allocation is $\boldsymbol{x}^{\mathrm{base}}(\ell)$. Hence its reward satisfies
\[
    \sum_{i=1}^m \left(x_i^{\mathrm{base}}(\ell) r_i+F(x_i^{\mathrm{base}}(\ell) )\right)
    \ge
    \sum_{i=1}^m x_i^{\mathrm{base}}(\ell)  r_i
    =
    V^{\mathrm{Base}}(\ell).
\]
Similarly, if exactly $\ell$ customers are served under $\pi^{\rm cap}$, its allocation is $\boldsymbol{x}^{\mathrm{cap}}(\ell)$. Hence its reward satisfies
\[
    \sum_{i=1}^m \left(x_i^{\mathrm{cap}}(\ell) r_i+F(x_i^{\mathrm{cap}}(\ell))\right)
    \ge
    \sum_{i=1}^m F(x_i^{\mathrm{cap}}(\ell))
    =
    V^{\mathrm{Bonus}}(\ell).
\]
Combining these two inequalities with \eqref{eq:pathwise-split}, we obtain, for every $\ell\in\{0,1,\dots,C\}$,
\[
    U(\ell)
    \le
    V^{\mathrm{Base}}(\ell)+V^{\mathrm{Bonus}}(\ell)
    \le
    \Rev^{\pi^{\mathrm{base}}}(\ell)+\Rev^{\pi^{\mathrm{cap}}}(\ell),
\]
where $\Rev^{\pi}(\ell)$ denotes the reward obtained by algorithm $\pi$ when exactly $\ell$ customers are served. Substituting $\ell=L$ and taking expectations gives
\[
    \mathbb E[U(L)]
    \le
    \mathbb E[\Rev^{\pi^{\mathrm{base}}}]
    +
    \mathbb E[\Rev^{\pi^{\mathrm{cap}}}].
\]
This concludes our proof on the competitive ratio guarantee. 

To establish the $O(TC)$ running time, it remains only to select the better of the two candidate policies. By Lemma~\ref{lemma:fixed-order-eval}, the expected reward of any fixed-order fill algorithm can be computed in $O(TC)$ time. Applying the lemma to both $\pi^{\mathrm{base}}$ and $\pi^{\mathrm{cap}}$, and choosing the algorithm with the larger computed expected reward, therefore takes $O(TC)$ time. This completes the proof. \Halmos
\end{proof}

\begin{proof}[Proof of Theorem \ref{thm:homo-half-tight}]
We prove the lower bound by constructing a family of homogeneous-bonus instances parameterized by $p\in(0,1)$. We first describe the instance, then compute the offline benchmark and the optimal online value, and finally derive the competitive ratio for this family.

Consider an instance with two periods, where $p_1=1, p_2=p$. Thus one customer always arrives in period 1, and a second customer arrives in period 2 with probability $p$. There are two products. Product 1 has capacity $b_1=1$ and base reward $r_1=1$. Product 2 has capacity $b_2=2$ and base reward $r_2=0$. The sales-driven bonus is homogeneous across products: $f_1=f_2=f$, where
\[
f(1)=0,\qquad f(k)=\frac{1}{p} \ \text{for all } k\ge 2.
\]
Note that $f$ is nonnegative and nondecreasing.

\textbf{(a) Offline benchmark.}
In this family of instances, we have
\[
\mathbb{P}(L=1) = 1-p,
\qquad 
\mathbb{P}(L=2) = p.
\]
Since the total capacity $C=b_1+b_2=3$ and at most two customers arrive, we have $L\le C$. When $L=1$, the offline benchmark assigns the customer to product 1 and obtains reward $1$; when $L=2$, it assigns both customers to product 2 and obtains reward $1/p$. Hence, the offline value is
\[
    \mathbb E[U(L)]
    =
    (1-p)\cdot 1+p\cdot \frac{1}{p}
    =
    2-p.
\]

\textbf{(b) Optimal online algorithm.}
The online algorithm must decide where to assign the first customer before observing whether the second customer will arrive. 

\textit{Option 1: assign customer 1 to product 1.}  
This yields immediate reward $1$. If a second customer arrives, product 1 is already full, and assigning the second customer to product 2 yields only $f(1)$, which is $0$. Thus the expected reward from this choice is $1$.

\textit{Option 2: assign customer 1 to Product 2.}  
The first customer yields $0$. If a second customer arrives, assigning that customer again to product 2 yields total reward $1/p$. Thus the expected reward from this choice is $ p \cdot ({1}/{p}) = 1$.

Therefore, no first-period action yields expected reward greater than $1$, and randomizing among these actions cannot improve the value. Hence $\OPT_{\mathrm{on}}=1$.

\textbf{(c) Competitive ratio.}
For this family of instances, we have the competitive ratio as
\[
\frac{\OPT_{\mathrm{on}}}{\mathbb E[U(L)]}
= \frac{1}{2 - p}.
\]

Taking $p \to 0$, we obtain
\[
\lim_{p \to 0} \frac{1}{2 - p} = \frac{1}{2}.
\]

Thus, for any $c>1/2$, choosing $p$ sufficiently small gives an instance in which no online algorithm can achieve expected reward at least $c\,\mathbb E[U(L)]$. Hence no online algorithm can guarantee a competitive ratio strictly larger than $1/2$. This concludes the proof. \Halmos
\end{proof}

\begin{proof}[Proof of Theorem \ref{thm:lump-decay}]
We prove the theorem by constructing a family of heterogeneous lump-sum instances parameterized by $p\in(0,1)$. We first describe the instance, then compute the offline benchmark and bound the optimal online value, and finally evaluate the limiting competitive ratio as a function of the number of products $m$.

Fix $m\ge 2$ and $p\in(0,1)$. 
Consider an instance with $T=m$ periods, where $p_1=1,\;p_2=\cdots=p_m=p$. Thus, one customer always arrives in period 1, and each remaining arrival occurs independently with probability $p$.
There are $m$ products. Product $i$ has capacity $b_i=i$, and base reward $r_i=0$. The bonus function of product $i$ is a lump-sum:
\[
    f_i(k)=v_i\mathbbm 1\{k\ge b_i\},\qquad v_i=\frac{1}{p^{i-1}}.
\]
Note that $f_i$ is nonnegative and nondecreasing for all $i=1,2,\dots,m$. Since product $i$ has capacity $b_i=i$, only the first $i$ sales of that product can occur. Hence product $i$ earns no reward from its first $i-1$ sales and earns the lump-sum reward $v_i$ on its $i$-th, capacity-filling sale.

For this family, let $\mathbb E_p$ denote expectation under the arrival distribution with parameter $p$, and let $\OPT_{\mathrm{on}}(p):=\sup_{\pi}\mathbb E_p[\Rev^\pi]$ be the optimal online value for this instance. We write $\OPT_{\mathrm{on}}(p)$ to emphasize that the instance, and hence the optimal online value, depends on $p$.

Let $Z\sim\mathrm{Binomial}(m-1,p)$ denote the number of arrivals after period $1$. Then the total number of arrivals is $L=1+Z$. Since the total capacity satisfies $C=\sum_{i=1}^m b_i\ge m$ and $L\le m$, we have $L\le C$ in this instance.

\textbf{(a) Offline benchmark.} Conditional on $L=\ell$, any product $i\le \ell$ can be completed, while no product $i>\ell$ can be completed. Since rewards are lump-sum, any allocation that completes only products $1,\dots,\ell-1$ can earn at most $\sum_{i=1}^{\ell-1}v_i$. For $p\le 1/2$, product $\ell$ weakly dominates this total smaller-product reward:
\[
    \sum_{i=1}^{\ell-1}v_i
    =\sum_{i=1}^{\ell-1}\frac{1}{p^{i-1}}
    =v_\ell\sum_{i=1}^{\ell-1}p^{\ell-i}
    =v_\ell \cdot p\cdot \frac{1-p^{\ell-1}}{1-p}
    \le v_\ell\frac{p}{1-p}
    \le v_\ell,
\]
where the third equality is the finite geometric-sum formula, the two inequalities follows since $0<p\le 1/2$. Therefore, when $L=\ell$, the offline benchmark fills product $\ell$ and obtains reward $v_\ell$. Hence
\begin{align*}
    \E[U(L)]
    &
    =\sum_{\ell=1}^{m}\P(L=\ell)v_\ell
    =\sum_{\ell=1}^{m}\P(Z=\ell-1)v_\ell
    =\sum_{\ell=1}^{m}\binom{m-1}{\ell-1}p^{\ell-1}(1-p)^{m-\ell}\cdot
    \frac{1}{p^{\ell-1}}\\
    &
    =\sum_{\nu=0}^{m-1}\binom{m-1}{\nu}(1-p)^{m-1-\nu}
    =(1+(1-p))^{m-1}
    =(2-p)^{m-1},
\end{align*}
where the fourth equality follows from the binomial theorem.
    
\textbf{(b) Upper and lower bounds for optimal online algorithm.}
For each product $i\in\{1,\dots,m\}$, we define 
\[
\Phi_i(p):=\P(Z\ge i-1)v_i=\sum_{\nu=i-1}^{m-1}\binom{m-1}{\nu}p^{\nu-i+1}(1-p)^{m-1-\nu}.
\]
This is the expected reward of the algorithm that assigns all arrivals to product $i$ until it is completed.

\textit{(i) Lower bound.} 
Fix $i\in\{1,\dots,m\}$, and consider the algorithm that assigns every arriving customer to product $i$ until it is filled. This algorithm completes product $i$ if and only if $L\ge i$, equivalently $Z\ge i-1$, and then earns reward $v_i$. Its expected reward is therefore $\Phi_i(p)$. Maximizing over $i$ gives
\[
    \OPT_{\mathrm{on}}(p)\ge \max_{1\le i\le m}\Phi_i(p).
\]

\textit{(ii) Upper bound.} 
Consider any online algorithm $\pi$. The first customer arrives with certainty. Let $J^\pi\in\{0,1,\dots,m\}$ denote the first-period decision of algorithm $\pi$, where $J^\pi=j$ means that the first customer is assigned to product $j$, and $J^\pi=0$ means that the first customer is rejected.
Suppose first that $J^\pi=j$ for some $j\in\{1,\dots,m\}$. Since all rewards are lump-sum, we can write the expected reward as
\[
    \mathbb E[\Rev^\pi\mid J^\pi=j]
    =
    \sum_{i=1}^m v_i\,\P_\pi(\text{product }i\text{ is completed}\mid J^\pi=j).
\]  

Conditional on $J^\pi=j$, product $j$ has already received one customer, so completing it requires at least $j-1$ additional arrivals from the remaining periods. Hence, its completion probability is upper bounded by
\[
    \P_\pi(\text{product }j\text{ is completed}\mid J^\pi=j)
    \le
    \P(Z\ge j-1).
\]
For any other product $i\neq j$, product $i$ receives no customer in period 1, so completing it requires at least $i$ arrivals from the remaining periods. Hence
\[
    \P_\pi(\text{product }i\text{ is completed}\mid J^\pi=j)
    \le
    \P(Z\ge i).
\]

Define $\Phi_{m+1}(p):=0$. Since $v_i=pv_{i+1}$ for $i=1,\dots,m-1$, and $\Pr(Z\ge m)=0$, we obtain
\[
    \mathbb E[\Rev^\pi\mid J^\pi=j]
    \le
    \P(Z\ge j-1)v_j
    +
    \sum_{i=1,i\neq j}^{m}
    \P(Z\ge i)v_{i}
    =
    \Phi_j(p)+\sum_{i=1,i\neq j}^{m} p\,\Phi_{i+1}(p).
\]

If $J^\pi=0$, then completing product $i$ requires at least $i$ arrivals from the remaining periods, so
\[
    \mathbb E[\Rev^\pi\mid J^\pi=0]
    \le
    \sum_{i=1}^m \P(Z\ge i)v_i
    =
    \sum_{i=1}^m p\,\Phi_{i+1}(p).
\]
This rejection bound is dominated by the assignment bound above. 

Since this bound holds for every realized first choice $J^\pi=j$, taking expectation over $J^\pi$, we can further upper bound the expected revenue of policy $\pi$ as
\[
    \mathbb E[\Rev^\pi]\le\max_{1\le j\le m}
    \left[\Phi_j(p)+\sum_{i=1,i\neq j}^{m}p\,\Phi_{i+1}(p)\right].
\]

Since $\pi$ was arbitrary, this gives the desired upper bound on the optimal online value. Combining it with the lower bound from part~\textit{(i)}, we obtain
\begin{equation}
    \label{eq:lump-sandwich}
    \max_{1\le i\le m}\Phi_i(p)
    \le
    \OPT_{\mathrm{on}}(p)
    \le
    \max_{1\le i\le m}
    \left[
    \Phi_i(p)+\sum_{\substack{j=1, j\neq i}}^{m}p\,\Phi_{j+1}(p)
    \right].
\end{equation}

\textbf{(c) Competitive ratio.}
For each fixed product $i$, we have
\[
    \lim_{p\to 0}\Phi_i(p)=\binom{m-1}{i-1},
\]
because all terms with positive powers of $p$ vanish and the leading term remains corresponding to $\nu=i-1$. Taking $p\to 0$ in \eqref{eq:lump-sandwich}, the lower bound converges to
\[
    \max_{1\le i\le m}\binom{m-1}{i-1}
    =
    \binom{m-1}{\lfloor(m-1)/2\rfloor}.
\]
The upper bound has the same limit since $p\,\Phi_{j+1}(p)\to 0$ for each $j$. Hence, the lower bound also converges to $\binom{m-1}{\lfloor(m-1)/2\rfloor}$.

Since the two limits agree, we obtain the competitive ratio
\[
    \lim_{p\to 0}
    \frac{\OPT_{\mathrm{on}}(p)}{\mathbb E[U(L)]}
    =
    \frac{\binom{m-1}{\lfloor(m-1)/2\rfloor}}{2^{m-1}}.
\]
Finally, by Stirling's approximation for the central binomial coefficient,
\[
    \frac{\binom{a}{\lfloor a/2\rfloor}}{2^a}
    \sim
    \sqrt{\frac{2}{\pi a}}.
\]
Applying this with $a=m-1$ shows that the ratio is on the order of $1/\sqrt{m}$. 

This concludes the proof.
\Halmos
\end{proof}

\section{Omitted Proofs in Section~\ref{sec:main}}
\label{app:concave-proofs}

\subsection{Failure of Greedy and LP-Based Rounding Under Discrete Concave}
\label{app:concave-proofs-fail}

\begin{proposition}[Myopic greedy has no constant guarantee under discrete-concave bonuses]
\label{prop:greedy-fail-concave}
For every constant \(c>0\), there exists a full-compatibility instance with a
common nonnegative, nondecreasing, discrete-concave bonus function \(f\) such that
\[
    \mathbb E[\Rev^{\mathrm{Greedy}}]<c\,\OPT_{\mathrm{on}}.
\]
The claim holds even when the realized number of arrivals is deterministic.
\end{proposition}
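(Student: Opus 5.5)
The plan is to give an explicit instance with deterministic arrivals in which greedy is distracted by many capacity-one products with high immediate reward. Meanwhile a single large product with a linear bonus accumulates quadratic value. The linear bonus $f(k)=k$ is the boundary case of Assumption~\ref{assump:discrete-concavity}: it is nonnegative, nondecreasing, and its increments are constant, so $f(k+1)-f(k)\le f(k)-f(k-1)$ holds with equality.

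\textbf{Instance.} Fix $n\ge 2$ and set $T=n$ with $p_t=1$ for all $t$, so $L=n$ deterministically. Use the common bonus $f(k)=k$ and $m=n+1$ products.
\begin{itemize}
\item Products $1,\dots,n$ are decoys, each with capacity $b_i=1$ and base reward $r_i=1$.
\item Product $n+1$ is the buildup product, with capacity $b_{n+1}=n$ and base reward $r_{n+1}=0$.
\end{itemize}

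\textbf{Greedy.} I will show by induction on the arrival index that greedy assigns every arrival to an unfilled decoy. At every step before all decoys are full, the buildup product has received nothing, so its current marginal reward is $0+f(1)=1$. Any unfilled decoy offers $1+f(1)=2>1$. So the choice is strict and no tie-breaking issue arises. Since there are exactly $n$ decoys and exactly $n$ arrivals, greedy fills all decoys and leaves the buildup product empty. Hence $\Rev^{\mathrm{Greedy}}=2n$ on every sample path, and so $\mathbb E[\Rev^{\mathrm{Greedy}}]=2n$.

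\textbf{Online benchmark.} Consider the online algorithm that sends every arrival to product $n+1$. This is feasible because $b_{n+1}=n=L$. It earns $R_{n+1}(n)=\sum_{k=1}^n k=n(n+1)/2$. Therefore $\OPT_{\mathrm{on}}\ge n(n+1)/2$.

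\textbf{Conclusion.} Combining the two bounds gives
\[
\frac{\mathbb E[\Rev^{\mathrm{Greedy}}]}{\OPT_{\mathrm{on}}}\le \frac{4}{n+1}.
\]
Given $c>0$, choosing $n>4/c-1$ yields $\mathbb E[\Rev^{\mathrm{Greedy}}]<c\,\OPT_{\mathrm{on}}$. The arrivals in this instance are deterministic, as the statement requires.

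\textbf{Main obstacle.} The only delicate point is making greedy prefer the decoys at every step, not just the first. This is where the earlier two-product example from Proposition~\ref{prop:greedy-fail} would fail: with only one decoy, greedy would move to the buildup product after a single step and lose almost nothing. Using $n$ unit-capacity decoys keeps the buildup product's marginal reward frozen at $f(1)$ for the entire horizon. Because the constraint is on the bonus increments $f(k+1)-f(k)$ and not on the cumulative reward, the linear bonus is admissible, and the loss grows like $n$ versus $n^2$.
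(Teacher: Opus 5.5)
Your proposal is correct and is essentially the paper's own proof. The paper uses the same instance: $B$ unit-capacity products with base reward $1$, one capacity-$B$ buildup product with base reward $0$, common bonus $f(k)=k$, and $B$ deterministic arrivals. It also uses the same strict $2>1$ preference argument for greedy and obtains the same bound $4/(B+1)$.
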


\begin{proof}
Fix an integer \(B\ge 2\). There are \(B+1\) products and exactly \(B\) arrivals,
all of which occur with probability one. Products \(1,\dots,B\) are safe
one-unit products: each has capacity \(1\) and base reward \(1\). Product \(0\)
is a buildup product: it has capacity \(B\) and base reward \(0\). All products
share the same bonus function
\[
    f(k)=k,\qquad k=0,1,2,\dots .
\]
This function is nonnegative and nondecreasing, and it is discrete concave
because \(f(k+1)-f(k)=1\) for every \(k\).

At any time before product \(0\) has been seeded, its current marginal reward is $r_0+f(1)=1$. Every unsold safe product has current marginal reward $1+f(1)=2$.
Therefore, the myopic greedy algorithm always assigns the next customer to an
unsold safe product as long as one exists. Since there are exactly \(B\) arrivals
and \(B\) safe products, greedy assigns all arrivals to the safe products and
obtains
\[
    \Rev^{\mathrm{Greedy}}=2B.
\]

In contrast, the online policy that assigns all \(B\) arrivals to product \(0\)
is feasible and obtains
\[
    R_0(B)=\sum_{k=1}^B f(k)=\sum_{k=1}^B k=\frac{B(B+1)}{2}.
\]
Because the arrival sequence is deterministic, this policy is admissible online,
and hence
\[
    \OPT_{\mathrm{on}}\ge \frac{B(B+1)}{2}.
\]
Consequently,
\[
    \frac{\mathbb E[\Rev^{\mathrm{Greedy}}]}{\OPT_{\mathrm{on}}}
    \le
    \frac{2B}{B(B+1)/2}
    =
    \frac{4}{B+1}.
\]
Letting \(B\to\infty\) proves that no universal positive constant guarantee is
possible.
\Halmos
\end{proof}
\begin{proposition}[LP-based independent rounding has no constant guarantee under discrete-concave bonuses]
\label{prop:lp-fail-concave}
For every constant \(c>0\), there exists a full-compatibility instance with a
common nonnegative, nondecreasing, discrete-concave bonus function \(f\) such
that the LP-based independent-rounding policy obtained from an optimal solution
of \eqref{eq:lp} satisfies
\[
    \mathbb E[\Rev^{\mathrm{LP}}]<c\,\OPT_{\mathrm{on}}.
\]
\end{proposition}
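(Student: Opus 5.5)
The plan is to reuse the two-product pattern from the proof of Proposition~\ref{prop:lp-fail}: a safe product against a buildup product, with rare late arrivals inflating the LP value of buildup. Discrete concavity no longer allows a jump bonus, so I would replace the jump by the linear bonus $f(k)=k$. This is the boundary case of Assumption~\ref{assump:discrete-concavity}, and it is shared by both products. The point is that a single safe unit with a large base reward can beat the expected buildup value. Meanwhile the LP, which only sees the mean $\mu_L$, sees a high \emph{average} coefficient on the buildup product and puts all of its mass there.

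\textbf{Instance.} Fix an integer $N\ge 4$ and set $T=N$, with $p_1=1$ and $p_t=q$ for $t=2,\dots,N$, where $q=N^{-3}$. Hence $\mu_L=1+(N-1)q<2$.
\begin{itemize}
\item Product $1$ is a safe product with $b_1=1$ and $r_1=A$.
\item Product $2$ is a buildup product with $b_2=N$ and $r_2=0$.
\item Both products use the common bonus $f(k)=k$.
\item We choose $A$ with $A+1<(N+1)/2$, for example $A=N/2-2$.
\end{itemize}

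\textbf{Step 1: LP solution.} For product $2$, any feasible $(y_{2,k})_k$ is nonincreasing in $k$. Pairing it with the increasing sequence $k$, Chebyshev's sum inequality gives
\[
\sum_{k=1}^N k\,y_{2,k}\le \frac{N+1}{2}\sum_{k=1}^N y_{2,k},
\]
with equality for the uniform profile $y_{2,k}\equiv t/N$. This profile is feasible whenever $t\le N$.

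Now let $s=y_{1,1}$ be the mass placed on product $1$, where the coefficient of $y_{1,1}$ is $A+1$. Every feasible solution then has value at most
\[
s(A+1)+(\mu_L-s)\frac{N+1}{2}.
\]
Because $A+1<(N+1)/2$, this bound is strictly decreasing in $s$. It is attained at $s=0$ by the uniform profile with total mass $\mu_L\le N$. Consequently, every optimal LP solution has $y^*_{1,1}=0$ and total mass $\mu_L$ on product $2$. This gives $q_1=0$ and $q_2=1$, so independent rounding sends every arrival to product $2$. I would stress this uniqueness step, since it removes any dependence on which optimal solution is selected.

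\textbf{Step 2: rewards.} Under the LP policy, the reward is $R_2(L)=L(L+1)/2$. Write $L=1+Z$ with $Z\sim\mathrm{Binomial}(N-1,q)$. Then $\E[L^2]\le 1+3\E[Z]+\E[Z^2]=1+O(Nq+N^2q^2)$, so
\[
\E[\Rev^{\mathrm{LP}}]=1+O(N^{-2})\le 2 .
\]
On the other side, the online policy that assigns the sure first customer to product $1$ earns $A+1$. Hence $\OPT_{\mathrm{on}}\ge N/2-1$, and the ratio is $O(1/N)$. Given $c>0$, taking $N$ large makes it smaller than $c$.

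\textbf{Main obstacle.} The delicate step is Step 1, namely certifying the LP optimizer rather than just exhibiting one solution. In the jump-bonus proof of Proposition~\ref{prop:lp-fail} this came from a simple per-unit comparison. Here the bonus is linear, so the LP has many tied profiles on product $2$. The argument must therefore go through the Chebyshev bound on average coefficients, together with the strict inequality $A+1<(N+1)/2$, to rule out any positive mass on the safe product. The remaining moment estimate is routine.
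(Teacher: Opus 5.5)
Your proposal is correct and is essentially the paper's proof. Both use the same two-product safe-versus-buildup instance with a linear common bonus and the same Chebyshev/pairing bound $\sum_k k\,y_{2,k}\le \frac{N+1}{2}\sum_k y_{2,k}$, with a strict coefficient gap forcing every optimal LP solution to put all mass on the buildup product, and the same comparison with the policy that sends the sure first customer to the safe product. The only differences are a rescaling: you keep $f(k)=k$ and take $r_1\approx N/2$, whereas the paper keeps $r_1=1$ and uses slope $4/B$; you also use arrival probability $N^{-3}$ instead of $B^{-2}$.
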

\begin{proof}[Proof of Proposition~\ref{prop:lp-fail-concave}]

We construct a family of instances indexed by an integer $B\ge 3$. There are
$T=B$ periods. The first customer arrives with certainty, so $p_1=1$, and each
later customer arrives independently with probability $1/B^2$, so
$p_t=1/B^2$ for $t=2,\dots,B$. Let
\(
    \mu_B:=\sum_{t=1}^B p_t
    =
    1+\frac{B-1}{B^2}
\)
denote the expected number of arrivals.

There are two products. Product $1$ has capacity $1$ and base reward $r_1=1$.
Product $2$ has capacity $B$ and base reward $r_2=0$. Both products share the
same bonus function
\[
    f(k)=\varepsilon_B k,
    \qquad
    \varepsilon_B:=\frac{4}{B}.
\]
The function $f$ is nonnegative, nondecreasing, and discrete concave, since
$f(k+1)-f(k)=\varepsilon_B$ for every $k$.

We first solve the LP explicitly for this instance. The LP becomes
\[
\begin{aligned}
\max \quad &
    \left(1+\varepsilon_B\right)y_{1,1}
    +
    \sum_{k=1}^B \varepsilon_B k\, y_{2,k} \\
\text{s.t.}\quad &
    y_{1,1}+\sum_{k=1}^B y_{2,k}\le \mu_B,\\
&
    1\ge y_{2,1}\ge y_{2,2}\ge \cdots \ge y_{2,B}\ge 0,\\
&
    0\le y_{1,1}\le 1 .
\end{aligned}
\]
We claim that the optimal solution is
\[
    y_{1,1}^*=0,
    \qquad
    y_{2,k}^*=\frac{\mu_B}{B},
    \quad k=1,\dots,B.
\]
This solution is feasible because $\mu_B/B\le 1$.

It remains to prove optimality. Let $y$ be any feasible solution and write
\(
    S_2:=\sum_{k=1}^B y_{2,k}.
\)
Since $y_{2,1}\ge y_{2,2}\ge\cdots\ge y_{2,B}\ge0$, we have
\[
    \sum_{k=1}^B k y_{2,k}
    \le
    \frac{B+1}{2}\sum_{k=1}^B y_{2,k}
    =
    \frac{B+1}{2}S_2 .
\]
Indeed, this follows by pairing terms $k$ and $B+1-k$: for
$k<(B+1)/2$,
\[
\left(k-\frac{B+1}{2}\right)y_{2,k}
+
\left(B+1-k-\frac{B+1}{2}\right)y_{2,B+1-k}
=
\left(\frac{B+1}{2}-k\right)
\left(y_{2,B+1-k}-y_{2,k}\right)
\le 0.
\]
The middle term is zero when $B$ is odd. Therefore,
\[
\sum_{k=1}^B
\left(k-\frac{B+1}{2}\right)y_{2,k}\le0,
\]
which proves the displayed inequality.

Using $\varepsilon_B=4/B$, the LP objective of any feasible solution is at most
\[
\begin{aligned}
\left(1+\varepsilon_B\right)y_{1,1}
+
\sum_{k=1}^B \varepsilon_B k\,y_{2,k}
&\le
\left(1+\frac{4}{B}\right)y_{1,1}
+
\frac{4}{B}\cdot \frac{B+1}{2}S_2 \\
&=
\left(1+\frac{4}{B}\right)y_{1,1}
+
\left(2+\frac{2}{B}\right)S_2 .
\end{aligned}
\]
Since $B\ge3$,
\[
    1+\frac{4}{B}
    <
    2+\frac{2}{B}.
\]
Hence
\[
\left(1+\frac{4}{B}\right)y_{1,1}
+
\left(2+\frac{2}{B}\right)S_2
\le
\left(2+\frac{2}{B}\right)(y_{1,1}+S_2)
\le
\left(2+\frac{2}{B}\right)\mu_B .
\]
Thus every feasible LP solution has objective value at most
\(
    \left(2+\frac{2}{B}\right)\mu_B .
\)

The proposed solution attains this upper bound, since
\[
\left(1+\varepsilon_B\right)y_{1,1}^*
+
\sum_{k=1}^B \varepsilon_B k\,y_{2,k}^*
=
\frac{\mu_B}{B}
\sum_{k=1}^B \frac{4k}{B}
=
\frac{\mu_B}{B}\cdot \frac{4}{B}\cdot \frac{B(B+1)}{2}
=
\left(2+\frac{2}{B}\right)\mu_B .
\]
Therefore
\[
    y_{1,1}^*=0,
    \qquad
    y_{2,k}^*=\frac{\mu_B}{B},
    \quad k=1,\dots,B
\]
is an optimal LP solution.

Moreover, because the inequality
\[
    1+\frac{4}{B}<2+\frac{2}{B}
\]
is strict, every optimal solution must put zero LP mass on product $1$ and use
the full LP budget on product $2$. Equality in the monotone-sequence inequality
then forces
\[
    y_{2,1}=y_{2,2}=\cdots=y_{2,B}=\frac{\mu_B}{B}.
\]
Thus the LP-based independent-rounding rule satisfies
\[
    q_1
    =
    \frac{y_{1,1}^*}{\mu_B}
    =
    0,
    \qquad
    q_2
    =
    \frac{\sum_{k=1}^B y_{2,k}^*}{\mu_B}
    =
    1.
\]
Hence the LP-rounding policy sends every arriving customer to product $2$.

It remains to compute the expected reward of this policy correctly. Let $X$ be
the number of arrivals after the first period. Then
\[
    X\sim \mathrm{Binomial}\left(B-1,\frac{1}{B^2}\right),
    \qquad
    N=1+X
\]
is the total number of arrivals. Since the LP-rounding policy sends every arrival
to product $2$, and at most $B$ customers can arrive, its reward is
\[
    R_2(N)
    =
    \sum_{k=1}^N \varepsilon_B k
    =
    \frac{\varepsilon_B N(N+1)}{2}.
\]
We have
\[
    \mathbb E[X]
    =
    \frac{B-1}{B^2}
    \le \frac1B,
\]
and
\[
    \mathbb E[X^2]
    =
    \operatorname{Var}(X)+(\mathbb E[X])^2
    =
    \frac{B-1}{B^2}\left(1-\frac1{B^2}\right)
      +\left(\frac{B-1}{B^2}\right)^2
    \le
    \frac1B+\frac1{B^2}.
\]
Therefore,
\[
    \mathbb E[N(N+1)]
    =
    \mathbb E[(1+X)(2+X)]
    =
    2+3\mathbb E[X]+\mathbb E[X^2]
    \le
    2+\frac{4}{B}+\frac{1}{B^2}.
\]
Since $\varepsilon_B=4/B$, we get
\[
    \mathbb E[\Rev^{\mathrm{LP}}]
    =
    \mathbb E[R_2(N)]
    =
    \frac{\varepsilon_B}{2}\mathbb E[N(N+1)]
    \le
    \frac{4}{B}+\frac{8}{B^2}+\frac{2}{B^3}.
\]

On the other hand, the online policy that assigns the first arriving customer to
product $1$ obtains reward
\[
    r_1+f(1)
    =
    1+\frac{4}{B}
\]
with certainty. Hence
\[
    \OPT_{\mathrm{on}}
    \ge
    1+\frac{4}{B}
    \ge 1.
\]
It follows that
\[
    \frac{
        \mathbb E[\Rev^{\mathrm{LP}}]
    }{
        \OPT_{\mathrm{on}}
    }
    \le
    \frac{4}{B}+\frac{8}{B^2}+\frac{2}{B^3}
    \xrightarrow[B\to\infty]{}0.
\]
Thus the LP-based independent-rounding policy has no universal constant guarantee,
even with a common nonnegative, nondecreasing, discrete-concave bonus function.
\Halmos
\end{proof}

\subsection{Proof of Lemma \ref{lemma:quad-prefix}}
\label{app:lem-res-prefix-order}

\begin{proof}[Proof of Lemma \ref{lemma:quad-prefix}]
We prove the lemma by reverse deletion in two steps. First, we show that for a single product, the last unit of any feasible repair increment cannot account for too large a share of that product's incremental value. Second, we repeatedly delete a unit with small marginal loss and reverse the deletion sequence to obtain the desired repair order.

\textbf{(a) One-product bound.} Fix a product $i$ and a base allocation $s_i$. For any feasible repair amount $d_i\in\{0,1,\dots,b_i-s_i\}$, define the incremental reward $\Delta R_i(d_i):=R_i(s_i+d_i)-R_i(s_i)$. We first establish the one-product bound: for every $d_i\ge 1$,
\begin{align}
    \label{eq:one-prod-bound}
    \Delta R_i(d_i)-\Delta R_i(d_i-1)\le\frac{2\Delta R_i(d_i)}{d_i+1}.
\end{align}
By definition, we have
\[
\Delta R_i(d_i)-\Delta R_i(d_i-1)=r_i+f_i(s_i+d_i), \qquad \Delta R_i(d_i)=d_i r_i+\sum_{k=1}^{d_i}f_i(s_i+k).
\]
We bound the base reward and bonus separately. For the base reward, $r_i\le 2d_i r_i/(d_i+1)$ for every $d_i\ge 1$. For the bonus part, we use a consequence of discrete concavity from the origin. Since $f_i(0)=0$ and $f_i$ is discrete concave, the ratio $f_i(k)/k$ is nonincreasing in $k\ge 1$. Therefore, for every $k=1,\dots,d_i$, we have
\[
f_i(s_i+k)\ge \frac{s_i+k}{s_i+d_i}f_i(s_i+d_i).
\]
Adding these inequalities over $k=1,\dots,d_i$ yields
\[
\sum_{k=1}^{d_i}f_i(s_i+k)\ge \frac{d_i s_i+d_i(d_i+1)/2}{s_i+d_i}f_i(s_i+d_i).
\]
The coefficient satisfies
\[
\frac{d_i s_i+d_i(d_i+1)/2}{s_i+d_i}=\frac{d_i+1}{2}+\frac{s_i(d_i-1)}{2(s_i+d_i)}
\ge\frac{d_i+1}{2},
\]
where the inequality follows since $s_i\ge 0$ and $d_i\ge 1$. Hence, the bonus part satisfies
\[
f_i(s_i+d_i)\le\frac{2}{d_i+1}\sum_{k=1}^{d_i}f_i(s_i+k).
\]
Combining the base rewrad and bonus bounds gives
\[
\Delta R_i(d_i)-\Delta R_i(d_i-1)\le \frac{2}{d_i+1}\left(d_i r_i+\sum_{k=1}^{d_i} f_i(s_i+k)\right)=\frac{2\Delta R_i(d_i)}{d_i+1}.
\]
This proves the one-product bound \eqref{eq:one-prod-bound}.

\textbf{(b) Reverse deletion.} We now apply a reverse-deletion argument. Start from the full repair increment and let $\boldsymbol d^{\mathrm{rem}}$ denote the remaining repair increment during the deletion process. Let $D^{\mathrm{rem}}=\|\boldsymbol d^{\mathrm{rem}}\|_1$. When $D^{\mathrm{rem}}$ units remain, we delete one unit and record its product label as $\eta_{D^{\mathrm{rem}}}$. Thus the deletion process produces labels $\eta_D,\eta_{D-1},\dots,\eta_1$.

Consider a deletion step with $D^{\mathrm{rem}}\ge 1$. For each active product $i$ with $d_i^{\mathrm{rem}}>0$, define the marginal loss from deleting one remaining unit of product $i$ as $\delta_i:=\Delta R_i(d_i^{\mathrm{rem}})-\Delta R_i(d_i^{\mathrm{rem}}-1)$. By \eqref{eq:one-prod-bound},
\[
(d_i^{\mathrm{rem}}+1)\delta_i\le 2\Delta R_i(d_i^{\mathrm{rem}}).
\]
Summing over all products with $d_i^{\mathrm{rem}}>0$ gives
\begin{align}
    \label{eq:one-prod-bound-sum}
    \sum_{i:d_i^{\mathrm{rem}}>0}(d_i^{\mathrm{rem}}+1)\delta_i\le 2\sum_{i\in\mathcal I}\Delta R_i(d_i^{\mathrm{rem}}).
\end{align}

We claim that there must exist a product $i$ with $d_i^{\mathrm{rem}}>0$ such that
\begin{align}
    \label{eq:delta-bound}
    \delta_i \le \frac{2\sum_{j\in\mathcal I}\Delta R_j(d_j^{\mathrm{rem}})}{D^{\mathrm{rem}}+1}.
\end{align}
Indeed, if this inequality failed for every product with $d_i^{\mathrm{rem}}>0$, then
\begin{align*}
\sum_{i:d_i^{\mathrm{rem}}>0}(d_i^{\mathrm{rem}}+1)\delta_i
&
> \frac{2\sum_{j\in\mathcal I}\Delta R_j(d_j^{\mathrm{rem}})}{D^{\mathrm{rem}}+1}
\sum_{i:d_i^{\mathrm{rem}}>0}(d_i^{\mathrm{rem}}+1)\\
&= \frac{2\sum_{j\in\mathcal I}\Delta R_j(d_j^{\mathrm{rem}})}{D^{\mathrm{rem}}+1}
\sum_{i:d_i^{\mathrm{rem}}>0}\left(D^{\mathrm{rem}} +\left|\{i:d_i^{\mathrm{rem}}>0\}\right|\right)
\ge 
2\sum_{j\in\mathcal I}\Delta R_j(d_j^{\mathrm{rem}}),
\end{align*}
contradicting the previous upper bound \eqref{eq:one-prod-bound-sum}, where the last inequality follows since we consider $D^{\mathrm{rem}}\ge 1$. Therefore, we can choose such a product $i$, delete one residual unit from it, and record $\eta_{D^{\mathrm{rem}}}=i$.

The current incremental value is $\sum_{j\in\mathcal I}\Delta R_j(d_j^{\mathrm{rem}})$. After deleting a product $i$ satisfying \eqref{eq:delta-bound}, the remaining incremental value is at least
\[
\sum_{j\ne i}\Delta R_j(d_j^{\mathrm{rem}})+\Delta R_i(d_i^{\mathrm{rem}}-1)=\sum_{j\in\mathcal I}\Delta R_j(d_j^{\mathrm{rem}})-\delta_i
\ge
\frac{D^{\mathrm{rem}}-1}{D^{\mathrm{rem}}+1}
\sum_{j\in\mathcal I}\Delta R_j(d_j^{\mathrm{rem}}),
\]
where the inequality follows from \eqref{eq:delta-bound}.

Thus, whenever $D^{\mathrm{rem}}$ units remain, we can delete one unit while retaining at least a fraction $(D^{\mathrm{rem}}-1)/(D^{\mathrm{rem}}+1)$ of the current incremental value. Starting from $D$ units and deleting until only $h$ units remain, the retained incremental value is at least
\[
\prod_{q=h+1}^{D}\frac{q-1}{q+1}\cdot\bigl(V(\boldsymbol{s}+\boldsymbol d)-V(\boldsymbol{s})\bigr)
=
\frac{h(h+1)}{D(D+1)}\bigl(V(\boldsymbol{s}+\boldsymbol d)-V(\boldsymbol{s})\bigr).
\]

Define the forward repair order by reversing the deletion sequence:
\[
\eta=(\eta_1,\eta_2,\dots,\eta_D).
\]
For any $h$, the first $h$ units in this repair order are exactly the units that remain after the deletion process has reduced the repair increment to $h$ units. Therefore, the corresponding prefix residual vector $\boldsymbol d^{(h)}$ satisfies
\[
V(\boldsymbol{s}+\boldsymbol{d}^{(h)})-V(\boldsymbol{s})
\ge \frac{h(h+1)}{D(D+1)}\bigl(V(\boldsymbol{s}+\boldsymbol{d})-V(\boldsymbol{s})\bigr)
\ge 
\left(\frac{h}{D}\right)^2\bigl(V(\boldsymbol{s}+\boldsymbol{d})-V(\boldsymbol{s})\bigr),
\]
where the second inequality follows since $D\ge h$. 

This concludes our proof. \Halmos

\end{proof}

\begin{lemma}[Offline scale smoothness]
\label{lem:scale-smoothness}
For any two served-count levels $0\le \ell\le \ell'\le C$,
\[
U(\ell)\ge \left(\frac{\ell}{\ell'}\right)^2 U(\ell').
\]
\end{lemma}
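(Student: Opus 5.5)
The plan is to deduce this directly from Lemma~\ref{lemma:quad-prefix}, applied with the zero base allocation $\boldsymbol s=\boldsymbol 0$. This is exactly the comparison sketched informally after Theorem~\ref{thm:det-repair}.

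First dispose of degenerate cases. If $\ell'=0$, then $\ell=0$, and we read the inequality as $U(0)\ge 0$, which is trivial since $U(0)=0$. If $\ell=0<\ell'$, the right-hand side is $0$ and $U(0)=0$, so the claim again holds. Hence assume $1\le\ell\le\ell'\le C$.

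Let $\boldsymbol x'$ be an optimizer attaining $U(\ell')$, so $0\le x'_i\le b_i$ and $\|\boldsymbol x'\|_1\le \ell'$. Set $D':=\|\boldsymbol x'\|_1$.

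\textbf{Case $D'\le \ell$.} The vector $\boldsymbol x'$ is itself feasible for the problem defining $U(\ell)$. Therefore $U(\ell)\ge V(\boldsymbol x')=U(\ell')\ge(\ell/\ell')^2U(\ell')$, using that $U(\ell')\ge0$ because all rewards are nonnegative.

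\textbf{Case $\ell<D'$.} Note $D'\ge 1$. The vector $\boldsymbol d=\boldsymbol x'$ is a feasible repair increment from $\boldsymbol s=\boldsymbol 0$. Lemma~\ref{lemma:quad-prefix} gives a repair order whose prefix $\boldsymbol d^{(\ell)}$ of length $\ell$ satisfies
\[
V(\boldsymbol d^{(\ell)})\ge \left(\frac{\ell}{D'}\right)^2 V(\boldsymbol x')\ge \left(\frac{\ell}{\ell'}\right)^2U(\ell'),
\]
where we use $V(\boldsymbol 0)=0$, $D'\le \ell'$, and $V(\boldsymbol x')=U(\ell')\ge 0$. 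The prefix $\boldsymbol d^{(\ell)}$ satisfies $0\le d^{(\ell)}_i\le x'_i\le b_i$ and $\|\boldsymbol d^{(\ell)}\|_1=\ell$. It is therefore feasible for $U(\ell)$, so $U(\ell)\ge V(\boldsymbol d^{(\ell)})$, which concludes the argument.

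The only point needing care is the feasibility convention, since the algorithm's line uses $\|\boldsymbol x\|_1=K$ while the definition of $U$ uses $\le$. I will work with the model's definition ($\|\boldsymbol x\|_1\le\ell$). Under that definition $U$ is also nondecreasing in $\ell$, which handles the case $D'<\ell'$ cleanly.

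The main (mild) obstacle is simply noting that Lemma~\ref{lemma:quad-prefix} is stated relative to $D=\|\boldsymbol d\|_1$ rather than $\ell'$. This is why we need $D'\le\ell'$ and a separate treatment when $D'\le\ell$.
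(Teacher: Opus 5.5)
Your proof is correct and is essentially the paper's argument: apply Lemma~\ref{lemma:quad-prefix} with base allocation $\boldsymbol 0$ to an optimizer of $U(\ell')$, then use the length-$\ell$ prefix as a feasible point for $U(\ell)$. The only difference is your case split on $D'=\|\boldsymbol x'\|_1$, which the paper avoids by taking an optimizer with $\|\boldsymbol x^{\ell'}\|_1=\ell'$ (one exists because rewards are nondecreasing and $\ell'\le C$); your version makes this point explicit and is equally valid.
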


\begin{proof}[Proof of Lemma \ref{lem:scale-smoothness}]
    If $\ell'=0$, then $\ell=0$ and the claim is trivial. Hence suppose $\ell'\ge 1$. Let $\boldsymbol{x}^{\ell'}$ be an optimizer of $U(\ell')$, so that $\|\boldsymbol{x}^{\ell'}\|_1=\ell'$ and $V(\boldsymbol{x}^{\ell'})=U(\ell')$. Apply Lemma~\ref{lemma:quad-prefix} with base allocation $\boldsymbol{0}$ and repair increment $\boldsymbol{d}=\boldsymbol{x}^{\ell'}$. Then $D=\|\boldsymbol d\|_1=\ell'$, and there exists a repair order such that the prefix increment $\boldsymbol d^{(\ell)}$ of length $\ell$ satisfies
    \[
    V(\boldsymbol d^{(\ell)})-V(\boldsymbol 0)
    \ge
    \left(\frac{\ell}{\ell'}\right)^2
    \bigl(V(\boldsymbol{x}^{\ell'})-V(\boldsymbol 0)\bigr).
    \]
    Since $V(\boldsymbol 0)=0$, this gives
    \[
    V(\boldsymbol d^{(\ell)})
    \ge
    \left(\frac{\ell}{\ell'}\right)^2U(\ell').
    \]
    Moreover, $\boldsymbol d^{(\ell)}$ is a feasible allocation of exactly $\ell$ units: it is a prefix of the feasible allocation $\boldsymbol{x}^{\ell'}$, so $0\le d_i^{(\ell)}\le x_i^{\ell'}\le b_i$ for every product $i$, and $\|\boldsymbol d^{(\ell)}\|_1=\ell$. Therefore, $\boldsymbol d^{(\ell)}$ is feasible for the offline problem defining $U(\ell)$. By optimality of $U(\ell)$,
    \[
    U(\ell)\ge V(\boldsymbol d^{(\ell)})
    \ge
    \left(\frac{\ell}{\ell'}\right)^2U(\ell').
    \]
\Halmos
\end{proof}

\begin{lemma}[intermediate target progress]
\label{lem:intermediate-progress}
Consider a phase of Algorithm~\ref{alg:intermediate-repair} that starts with phase-start allocation $\boldsymbol{s}$ and served-count milestone $S=\|\boldsymbol{s}\|_1$. Let $K=\min\{\alpha S,C\}$ be the intermediate target level used in this phase. If $h\le \alpha S$ additional customers are served during this phase, then
\[
\Rev^\pi(S+h)\ge V(\boldsymbol{s})+\left(\frac{h}{\alpha S}\right)^2\bigl(U(K)-V(\boldsymbol{s})\bigr).
\]
\end{lemma}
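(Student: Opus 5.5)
We prove the lemma by applying Lemma~\ref{lemma:quad-prefix} to the repair increment built at the start of the phase, and then comparing the full repair value with $U(K)$ through monotonicity.

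First we fix notation. At the phase start the algorithm computes $\boldsymbol x^K$ with $V(\boldsymbol x^K)=U(K)$. It sets $\boldsymbol d=(\boldsymbol x^K-\boldsymbol s)^+$ and $D=\|\boldsymbol d\|_1$, and it uses the order $\eta$ from Lemma~\ref{lemma:quad-prefix}. This $\boldsymbol d$ is a feasible repair increment, because $s_i+d_i=\max\{s_i,x_i^K\}\le b_i$. Moreover, $D\le \|\boldsymbol x^K\|_1=K\le\alpha S$.

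Next we split on the number $h$ of customers served in the phase.

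\textbf{Case $h\le D$.} The first $h$ arrivals of the phase follow $\eta$ exactly, so the allocation after them is $\boldsymbol s+\boldsymbol d^{(h)}$. By Lemma~\ref{lemma:quad-prefix},
\[
\Rev^\pi(S+h)\ge V(\boldsymbol s)+\left(\frac{h}{D}\right)^2\bigl(V(\boldsymbol s+\boldsymbol d)-V(\boldsymbol s)\bigr).
\]

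\textbf{Case $h>D$.} The full repair has been executed, and any further assignments only add nonnegative marginal rewards. Hence $\Rev^\pi(S+h)\ge V(\boldsymbol s+\boldsymbol d)$.

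It remains to pass from $V(\boldsymbol s+\boldsymbol d)$ to $U(K)$ and from $h/D$ to $h/(\alpha S)$. Since $\boldsymbol s+\boldsymbol d\ge \boldsymbol x^K$ coordinatewise and each $R_i$ is nondecreasing (all marginal rewards $r_i+f_i(k)\ge 0$), we get $V(\boldsymbol s+\boldsymbol d)\ge U(K)$. This gives $V(\boldsymbol s+\boldsymbol d)-V(\boldsymbol s)\ge U(K)-V(\boldsymbol s)$.

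This step needs care when $U(K)-V(\boldsymbol s)<0$. In that case the target term is negative, so the claimed bound is at most $V(\boldsymbol s)$. It then follows directly from $\Rev^\pi(S+h)\ge V(\boldsymbol s)$, which holds by monotonicity.

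When $U(K)-V(\boldsymbol s)\ge 0$, we argue in each case. In the case $h\le D$, we use $(h/D)^2\ge (h/(\alpha S))^2$, which holds because $D\le\alpha S$. In the case $h>D$, we use $V(\boldsymbol s+\boldsymbol d)\ge V(\boldsymbol s)+1\cdot(U(K)-V(\boldsymbol s))$ together with $(h/(\alpha S))^2\le 1$, which holds because $h\le\alpha S$. Combining these yields the claim.

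We also handle the degenerate case $D=0$. Then $\boldsymbol s\ge\boldsymbol x^K$, so $V(\boldsymbol s)\ge U(K)$ and the bound again reduces to monotonicity.

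The main obstacle is bookkeeping rather than depth. We must check three things. First, the phase indeed executes $\eta$ in order from its first arrival. Second, when $K=C$ the truncation is harmless, since we still have $D\le K\le\alpha S$. Third, the sign of $U(K)-V(\boldsymbol s)$ must be handled, since the quadratic factor can only be weakened when the multiplied quantity is nonnegative.
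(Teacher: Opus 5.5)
Your proposal is correct and follows the paper's proof essentially step for step. You build $\boldsymbol d=(\boldsymbol x^K-\boldsymbol s)^+$, note $D\le K\le\alpha S$ and $V(\boldsymbol s+\boldsymbol d)\ge U(K)$ by monotonicity, and then split into the cases $D=0$, $h\le D$ (via Lemma~\ref{lemma:quad-prefix}) and $h>D$ (completed repair plus monotonicity). Your separate handling of $U(K)-V(\boldsymbol s)<0$ is valid but unnecessary: since $S\le K$, we have $V(\boldsymbol s)\le U(S)\le U(K)$, which is the fact the paper uses.
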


\begin{proof}
Let $\boldsymbol{x}^K$ be an offline target attaining $U(K)$. The algorithm defines the repair increment $d_i=(x_i^K-s_i)^+$, for all $i\in\mathcal I$, 
Let $D=\|\boldsymbol{d}\|_1$. By construction, $\boldsymbol{s}+\boldsymbol d\ge \boldsymbol{x}^K$ coordinatewise. Since rewards are monotone,
\[
V(\boldsymbol{s}+\boldsymbol{d})\ge V(\boldsymbol{x}^K)=U(K).
\]

Moreover,
\[
D=\sum_{i\in\mathcal I}(x_i^K-s_i)^+\le\sum_{i\in\mathcal I}x_i^K=K\le\alpha S.
\]
If $D=0$, then $\boldsymbol{s}\ge \boldsymbol{x}^K$ coordinatewise, so $V(\boldsymbol{s})\ge U(K)$, and the desired bound is immediate. Hence suppose $D\ge 1$.

If $h\le D$, then after $h$ additional assignments, the algorithm has executed the first $h$ units of the repair order. We obtain
\[
\Rev^\pi(S+h)-V(\boldsymbol{s})
\ge\left(\frac{h}{D}\right)^2\bigl(V(\boldsymbol{s}+\boldsymbol d)-V(\boldsymbol{s})\bigr)
\ge \left(\frac{h}{D}\right)^2\bigl(U(K)-V(\boldsymbol{s})\bigr)
\ge \left(\frac{h}{\alpha S}\right)^2\bigl(U(K)-V(\boldsymbol{s})\bigr),
\]
where the first inequality follows from Lemma \ref{lemma:quad-prefix}, the second inequality follows since $V(\boldsymbol{s}+\boldsymbol{d})\ge U(K)$ and the third inequality follows since $D\le \alpha S$.

If $h>D$, then the repair increment has already been completed. Any additional assignments after completion can only increase reward. So, we have
\[
\Rev^\pi(S+h)\ge V(\boldsymbol{s}+\boldsymbol d)\ge U(K)\ge \left[1-\left(\frac{h}{\alpha S}\right)^2\right]V(\boldsymbol{s})+\left(\frac{h}{\alpha S}\right)^2U(K),
\]
where the first inequality follows since the reward is nondecreasing, the second inequality follows since $V(\boldsymbol{s}+\boldsymbol{d})\ge V(\boldsymbol{x}^K)$, and the last follows from $V(\boldsymbol{s})\le U(S)\le U(K)$ and $h\le \alpha S$. 
Thus the desired bound also holds when $h>D$. This concludes the proof. \Halmos
\end{proof}

\subsection{Proof of Theorem~\ref{thm:det-repair}}

\begin{proof}[Proof of Theorem~\ref{thm:det-repair}]
For clean exposition, we first present the argument in the normalized setting where all served-count milestones and intermediate target demand levels are integers. The choice $\alpha=2$ satisfies this condition directly, and standard rounding for general $\alpha$ is discussed at the end of the proof. The proof has two steps. First, we show by induction that whenever the algorithm reaches a served-count milestone, its reward is at least a fixed fraction of the offline optimum at that milestone. Second, we use Lemma~\ref{lem:intermediate-progress} to control the algorithm's reward when arrivals stop inside a phase, before the next milestone is reached.

If the realized served-count level is $\ell=0$, then the algorithm earns zero and $U(0)=0$, so the claim is trivial. 

Hence suppose $\ell\ge 1$. The first arriving customer is assigned to a product attaining $U(1)$, that is, myopically find the largest reward for the first sales among all products. Thus, after the first accepted customer, the policy is at the first served-count milestone $S=1$ and has reward $\Rev^\pi(1)=U(1)$.

\textbf{(a) Milestone induction.}
Let $S_k=(1+\alpha)^k$ for $k=0,1,2,\ldots$, and define $M_\alpha=(\alpha/(1+\alpha))^2$. We prove by induction on $k$ that whenever the policy reaches milestone $S_k$, its current allocation $\boldsymbol s_k$ satisfies $V(\boldsymbol{s}_k)\ge M_\alpha U(S_k)$.

\textit{Base case: $k=0$.} We have $S_0=1$. Since the first arriving customer is assigned to the product attaining $U(1)$, so the bound holds:
\begin{align}
    \label{eq:det-proof-milestone-bound}
    V(\boldsymbol{s}_0)=U(1)\ge M_\alpha U(1),
\end{align}
where the inequality follows from $M_\alpha<1$.

\textit{Inductive step.} Suppose the bound holds at milestone $S_k$, that is, $V(\boldsymbol{s}_k)\ge M_\alpha U(S_k)$. We show that, if the algorithm reaches the next milestone $S_{k+1}$, then $V(\boldsymbol{s}_{k+1})\ge M_\alpha U(S_{k+1})$.

During the phase that starts at $S_k$, the algorithm uses the intermediate target $\alpha S_k$. The next milestone is $S_{k+1}=(1+\alpha)S_k$. Thus, if the policy reaches $S_{k+1}$, it has served $\alpha S_k$ additional customers during this phase. 
Therefore, we get
\begin{align*}
V(\boldsymbol{s}_{k+1})
&
\ge V(\boldsymbol{s}_k)+\left(\frac{S_{k+1}-S_k}{\alpha S_k}\right)\left[U(\alpha S_k)-V(\boldsymbol{s}_k)\right]
= U(\alpha S_k)\\
&
\ge \left(\frac{\alpha S_k}{(1+\alpha)S_k}\right)^2
U((1+\alpha)S_k)=M_\alpha U(S_{k+1}),
\end{align*}
where the first inequality follows since Lemma~\ref{lem:intermediate-progress} and the second inequality follows since Lemma~\ref{lem:scale-smoothness}.
This completes the induction over milestone indices $k$.

\textbf{(b) Served-count levels inside a phase.} We next bound the policy's reward when arrivals stop inside a phase before reaching the next milestone. Consider a phase that starts at served-count milestone $S$ with phase-start allocation $\boldsymbol{s}$. Let $K=\min\{\alpha S,C\}$ is the intermediate target used in this phase. Let the realized served-count level be $\ell=\theta S$, where $\theta\in[1,\min\{1+\alpha,C/S\}]$. Then the number of additional customers served during the current phase is $h=\ell-S=(\theta-1)S$.
Applying Lemma~\ref{lem:intermediate-progress} gives
\begin{equation}
\label{eq:det-proof-convex-combination}
    \Rev^\pi(\theta S)
    \ge V(\boldsymbol{s})+\left(\frac{\theta-1}{\alpha}\right)^2\bigl(U(K)-V(\boldsymbol{s})\bigr)
    =\left(\frac{\theta-1}{\alpha}\right)^2U(K)+\left(1-\left(\frac{\theta-1}{\alpha}\right)^2\right)V(\boldsymbol{s}).
\end{equation}
We now compare the two terms on the right-hand side of \eqref{eq:det-proof-convex-combination} with $U(\theta S)$, respectively. 

First, consider the intermediate target term $U(K)$. 
When $K=\alpha S$, then we consider two cases. 
If $\theta\le\alpha$, then monotonicity of $U$ gives $U(\alpha S)\ge U(\theta S)$. If $\theta>\alpha$, then Lemma~\ref{lem:scale-smoothness} gives
\[
U(\alpha S)\ge \left(\frac{\alpha S}{\theta S}\right)^2U(\theta S)=\left(\frac{\alpha}{\theta}\right)^2U(\theta S).
\]

When $K=C<\alpha S$, then $\theta\le C/S<\alpha$. So monotonicity gives $U(K)=U(C)\ge U(\theta S)$. Therefore, combining all cases
\begin{equation}
\label{eq:det-proof-target-compare}
    U(K)\ge \min\left\{1,\frac{\alpha^2}{\theta^2}\right\}U(\theta S).
\end{equation}

Second, for the value already held at the starting milestone, we have
\begin{equation}
\label{eq:det-proof-start-compare}
V(\boldsymbol{s})
\ge M_\alpha U(S)
\ge \frac{M_\alpha}{\theta^2}U(\theta S)
=\frac{\alpha^2}{(1+\alpha)^2\theta^2}U(\theta S),
\end{equation}
where the first inequality follows from the milestone bound \eqref{eq:det-proof-milestone-bound}, and the second inequality follows from  Lemma~\ref{lem:scale-smoothness} since $S\le\theta S$.

Substituting \eqref{eq:det-proof-target-compare} and \eqref{eq:det-proof-start-compare} into the lower bound on $\Rev^\pi(\theta S)$ in \eqref{eq:det-proof-convex-combination}, we obtain
\begin{align*}
\frac{\Rev^\pi(\theta S)}{U(\theta S)}
&
\ge \left(\frac{\theta-1}{\alpha}\right)^2\min\left\{1,\frac{\alpha^2}{\theta^2}\right\}+\left(1-\left(\frac{\theta-1}{\alpha}\right)^2\right)\frac{\alpha^2}{(1+\alpha)^2\theta^2}\\
&
= (\theta-1)^2\min\left\{\frac{1}{\alpha^2},\frac{1}{\theta^2}\right\}+\frac{\alpha^2-(\theta-1)^2}{(1+\alpha)^2\theta^2}.
\end{align*}
Since $\theta\in[1,\min\{1+\alpha,C/S\}]$, this ratio is at least the minimum of the same expression over the larger interval $[1,1+\alpha]$. Every realized demand level inside the phase satisfies
\[
\Rev^\pi(\theta S)\ge c_{\det}(\alpha)U(\theta S),
\]
where
\[
c_{\det}(\alpha)=\min_{\theta\in[1,1+\alpha]}\left\{(\theta-1)^2\min\left\{\frac{1}{\alpha^2},\frac{1}{\theta^2}\right\}+\frac{\alpha^2-(\theta-1)^2}{(1+\alpha)^2\theta^2}\right\}.
\]
This proves the pathwise guarantee for every demand level \(\ell\le C\) in the normalized setting.

Finally, substituting the realized demand level \(L\) and taking expectations gives
\[
\E[\Rev^\pi(L)]\ge c_{\det}(\alpha)\E[U(L)]\ge c_{\det}(\alpha)\OPT_{\mathrm{on}}.
\]
The numerical constants are obtained by one-dimensional minimization of $c_{\det}(\alpha)$. For the fully integral choice $\alpha=2$, we get $c_{\det}(2)=0.2466\ldots$. Optimizing the normalized expression over $\alpha>1$ gives $\sup_{\alpha>1}c_{\det}(\alpha)=0.2467\ldots$ at $\alpha\approx 1.9232$.

For non-integer values of $\alpha S$ and $(1+\alpha)S$, the target and milestone levels can be rounded by floors and ceilings. The same proof goes through with an $o(1)$ loss once the initial milestone is sufficiently large. The choice $\alpha=2$ requires no rounding and gives the stated fully integral guarantee.\Halmos
\end{proof}

\subsection{Computing Offline Target Allocations}
\label{app:offline}

In this subsection, we explain how to compute the offline targets used by the repair policies and then analyze the resulting implementation time. 

At each milestone, the repair policies need an offline target at some integer demand level $K$:
\[
\boldsymbol{x}^K\in\argmax_{\substack{0\le x_i\le b_i, \sum_i x_i\leq K}}\sum_{i\in\mathcal I}R_i(x_i).
\]

The offline target problem has a separable resource-allocation structure. When the reward curves
\(\{R_i(n)\}_{n=0}^{\min\{b_i,K\}}\) are explicitly tabulated, an exact target at demand
level \(K\) can be computed by dynamic programming. Define
\[
\mathrm{DP}[i,s]
=
\max\left\{
\sum_{j=1}^i R_j(x_j):
0\le x_j\le b_j,\ 
\sum_{j=1}^i x_j=s
\right\}.
\]
Initialize \(\mathrm{DP}[0,0]=0\) and \(\mathrm{DP}[0,s]=-\infty\) for \(s\ge 1\). For
\(i=1,\dots,m\),
\[
\mathrm{DP}[i,s]
=
\max_{0\le n\le \min\{b_i,s\}}
\left\{
\mathrm{DP}[i-1,s-n]+R_i(n)
\right\}.
\]
Then \(U(K)=\mathrm{DP}[m,K]\), and an optimal target allocation can be recovered by
storing a maximizer in each state. The dynamic program has \(O(mK)\) states and
\(O(K)\) transitions per state, so it runs in \(O(mK^2)\) time. Since the tabulated
reward curves have size \(O(\sum_i \min\{b_i,K\})\), this is polynomial in the
explicit level-expanded input size.

For the deterministic repair policy, the dynamic program can be run once up to the
largest target level ever requested,
\[
\bar K_\alpha=\min\{C,\lceil \alpha\min\{C,T\}\rceil\}.
\]
Let
\[
S_\alpha=\sum_i(\min\{b_i,\bar K_\alpha\}+1)
\]
be the corresponding truncated input size. This computes all target values and
backpointers in \(O(\bar K_\alpha S_\alpha)\) time. The policy replans only at
geometric served-count milestones, so the number of phases is
\(O(\log_{1+\alpha}(1+\min\{C,T\}))\). In each phase, recovering the target and
forming the residual target take \(O(m)\) time.

It remains to construct the residual prefix orders. In a phase with residual length
\(D^{(r)}\), the reverse-deletion procedure can be implemented with a heap in
\(O(D^{(r)}\log m)\) time. Since the phase targets grow geometrically, the total
residual length over all phases is \(O(\bar K_\alpha)\). Hence all prefix-order
computations take \(O(\bar K_\alpha\log m)\) time.

Combining online processing, dynamic programming, target recovery, residual construction,
and prefix ordering, the deterministic repair policy can be implemented in
\[
O\!\left(
T+\bar K_\alpha S_\alpha+\bar K_\alpha\log m
+m\log_{1+\alpha}(1+\min\{C,T\})
\right)
\]
time and \(O(m\bar K_\alpha)\) space. Since \(S_\alpha\le C+m\) and
\(\bar K_\alpha\le C\), this is at most
\[
O\!\left(T+C(C+m)+C\log m+m\log C\right).
\]

\end{appendix}
%

\end{document}